\newcommand{\remove}[1]{}
\newcommand{\hide}[1]{}
\def\func#1{\textrm{\bf{\sc{#1}}}}
\newtheorem{thm}{Theorem}[section]
\newtheorem{lemma}[thm]{Lemma}
\newtheorem{definition}[thm]{Definition}
\def\codewidth{6in} 
\renewcommand{\remove}[1]{}
\newcommand{\comments}[1]{}
\def\draft{1}   % 1 for working draft
    \def\ShowAuthNotes{1}
    \def\ShowAuthNotes{0}
\newcommand{\authnote}[2]{{ \footnotesize \bf{\color{red}[#1's Note: {\color{blue}#2}]}}}
\newcommand{\authnote}[2]{}
\begin{document}

\title{\textbf{Characterization and Construction of a Family of Highly Symmetric Spherical Polyhedra with Application in Modeling Self-Assembling Structures}}
\author{Muhibur Rasheed and Chandrajit Bajaj\\~\\ Center for Computational Visualization \\ Institute of Computational Engineering and Sciences \\ The University of Texas  \\ Austin, Texas 78712}

\date{} 
\maketitle
\thispagestyle{empty}

\begin{abstract}
The regular polyhedra have the highest order of 3D symmetries and are exceptionally attractive templates for (self)-assembly using minimal types of building blocks, from nano-cages and virus capsids to large scale constructions like glass domes. However, they only represent a small number of possible spherical layouts which can serve as templates for symmetric assembly. In this paper, we formalize the necessary and sufficient conditions for symmetric assembly using exactly one type of building block. All such assemblies correspond to spherical polyhedra which are edge-transitive and face-transitive, but not necessarily vertex-transitive. This describes a new class of polyhedra outside of the well-studied Platonic, Archimedean, Catalan and and Johnson solids. We show that this new family, dubbed \emph{almost-regular} polyhedra, can be parameterized using only two variables and provide an efficient algorithm to generate an infinite series of such polyhedra. Additionally, considering the almost-regular polyhedra as templates for the assembly of 3D spherical shell structures, we developed an efficient polynomial time shell assembly approximation algorithm for an otherwise NP-hard geometric optimization problem.
\end{abstract}

\section{Introduction}
\label{sec:intro}
Regular polyhedra are combinatorial marvels. They are simultaneously isogonal (vertex-transitive), isotoxal (edge-transitive), and isohedral (face-transitive). In this context, transitivity means that the vertices (or edges or faces) are congruent to each other; and for any pair of vertices (or edges or faces), there exists a symmetry preserving transformation of the entire polyhedron which isometrically maps one to the other. Transitivity plays a vital role is assembly, especially self-assembly. For instance, if we consider each face as a building block, then face transitivity indicates that a single type of block is sufficient to form a shell-like structure. Similarly, edge-transitivity indicates that there is exactly one way to put any two blocks together. 

Viruses, natures smallest organisms, utilize this simplicity by encoding the blueprint (RNA/DNA) for a single type of protein. These proteins can attach within copies of the same proteins in specific ways. The attaching or binding rules is such that when sufficient concentrations of building blocks are available, they combine to assemble into a spherical shell, called a capsid. These spherical shells have icosahedral symmetry. However, unlike a icosahedron, which only has 20 faces, more than half of the viruses have capsids which are formed by a much larger number of proteins. Caspar and Klug \cite{caspar62} first addressed the layout of virus capsids using triangular tiles. A similar class of assembly is seen in fullerene like particles, with 12 pentagonal and many hexagonal faces, which was first characterized by Goldberg \cite{Goldberg_1937}. 

Recently, several researchers have developed efficient constructions and parameterizations of Goldberg-like particles \cite{Hu_Qiu_2008,Schwerdtfeger_Wirz_Avery_2013,Fowler_Rogers_2001}. For instance, Deng et al.\ \cite{Deng_Yu_2012} studied extensions of Goldberg's construction to other platonic solids, but their study is not exhaustive in characterization and enumeration of all the possible cases. In another recent work, Schein and Gayed \cite{Schein_Gayed_2014} developed a numerical optimization scheme that takes a Goldberg-like polyhedra, which by construction does not have planar faces and is not always convex, and produces strictly convex polyhedra while preserving the edge-lengths. However, the resulting polyhedra no longer have any face-transitivity. Also, even though the edges have the same length, they are not strictly congruent as their neighboring faces are different- hence making such polyhedra unsuitable for using as layouts for assembly. 

In this paper, we introduce the \emph{almost-regular} polyhedra, which preserve the global polyhedral symmetry while offer a denser packing with local symmetries, thus ensuring that only one type of building block still suffices. Topologically, these polyhedra are face-transitive and edge-transitive; and has at most two distinct types of vertices (each vertex is transitive to the other vertices of the same type). In some cases, which we have characterized, the polyhedra may become non-convex and numerical optimization is required to keep edge-lengths equal and make the faces as congruent as possible. Note that even the non-convex polyhedra of this family remain spherical (i.e., any ray emanating from the centroid of the polyhedra will intersect it exactly once).

We believe that our results will greatly impact research in several areas including the field of nano-materials which can self-assemble to create nano-structures with desirable properties. For example, gold nanorods for cancer imaging and therapy~\cite{Chen_2005,XiaGold2014}, virus capsids and protein-cages for targeted drug delivery \cite{Shi_2010,Smith_2013,Steinmetz_2009,Shang_2012}. Advanced scientific computation techniques, to explore and automatically predict possible nano-structures that can be formed symmetrically by one type of building block (eg. an engineered protein) would accelerate the development of new nano-shell structures. Our theoretical groundwork would greatly support such extensive computational techniques. For instance, we show that using our symmetry characterization, a symmetric assembly of $n$ particles can be predicted using a algorithm whose running time is only polynomial in $n$, even though assembly prediction is an NP-hard optimization problem. 

\section{Background}

The boundary of a convex polyhedron is homeomorphic to a spherical tiling. While the space of all convex polyhedra is not enumerable (uncountably infinite), the sub-classes which exhibit one or more types of symmetry and congruency conditions are enumerable. Typical congruency conditions considered in this context are face transitivity, edge transitivity and vertex transitivity. A polyhedron is face-transitive (isohedral) if all faces of the polyhedron are congruent and are transitive. In other words, there exists a symmetry transformation of the entire polyhedron which would map any specific face, A, onto another specific face, B. Similarly, it is edge-transitive (isotoxal) if all edges of the polyhedron are congruent and transitive in the same sense as above; and vertex-transitive (isogonal) if all vertices of the polyhedron are congruent and transitive in the same sense as above.  

Depending on which of the above properties are satisfied, we have the following classes of polyhedra-
\begin{itemize}
 \item Johnson solid: Each face is a regular polygon. But the polyhedron does not satisfy any of the transitivity properties. There are exactly 92 such solids- all of them convex. The subclass of Johnson solids which have only equilateral triangular faces are called Deltahedra.
 \item Catalan solid: Duals of the Archimedean solids. Catalan solids are convex and face-transitive; but not edge-transitive of vertex transitive.
 \item Semiregular (also called uniform) polyhedra: These are vertex-transitive and each face is a regular polygon (of 2 or more different types). Examples of such polyhedra are the 13 Archimedean solids, and infinite series of convex prisms and anti-prisms with regular polygons as the two parallel faces.
 \item Quasiregular polyhedra: This class have vertex and edge transitivity. It is easy to show that they can have at most two types of regular faces. There are only two such polyhedra- cuboctahedron and icosidodecahedron.
 \item Regular polyhedra: A regular polyhedron is face, edge and vertex transitive; and has only one type of regular face. There are only 5 regular polyhedra in 3 dimensions. These are the Platonic solids: Tetrahedron, Cube, Octahedron, Icosahedron and Dodecahedron.
\end{itemize}

These special classes are all enumerated and there are only a finite number of them, except for the infinite family of prisms and anti-prisms (which are also enumerable as they map directly to the set of natural numbers). 

\paragraph{A note on symmetry}
A Symmetry group consist of a set of symmetry operations (i.e. transformations which map an object to itself) such that the set is closed under the composition (one transformation followed by another) operation. The actions of the polyhedral symmetry groups can be expressed as pure rotations around different axes through the center of the polyhedron, which we assume to be at the origin without loss of generality. We generally refer to the points of intersection of these axes with the polyhedron as the locations of symmetry, and refer to the axes as the symmetry axes. For instance, the octahedron have 6 axes of 4-fold rotational symmetry\footnote{$n$-fold rotational symmetry, also referred to as the symmetry order $n$, means that a rotation by $2\pi/n$ maps the polyhedron to itself} going through the four vertices, 8 axes of 3-fold rotational symmetry going through the centers of the faces, and 12 axes of 2-fold rotational symmetry going through the centers of the edges.

\subsection{The \emph{Almost-Regular} Polyhedra and Their Duals}
We define the family of \emph{almost-regular} polyhedra as all polyhedra which have global polyhedral symmetry, have congruent regular faces, and is face and edge transitive. The first condition means that any such polyhedron must have exactly the same number of rotational symmetry axes with the same symmetry orders as any specific regular polyhedron. We refer to these as the \emph{global} symmetry axes and locations. We shall refer to these global symmetry axes as gv-symmetry axes, ge-symmetry axes and gf-symmetry axes respectively for axes of symmetry going through, respectively, the vertices, edge-centers and face-centers of the regular polyhedron. Additionally, due to the congruency conditions, all vertices, faces and edges of an \emph{almost-regular} polyhedron must have locations of \emph{local} symmetry. Local symmetry operations map the vertices, edges, faces immediately neighboring the location of local symmetry to themselves, but may or may not map the remaining parts of the polyhedron to itself. These axes will be referred to as lv-, le-, and lf-symmetry axes.

\begin{figure}[h!]
\centering
\includegraphics[width=0.85\linewidth]{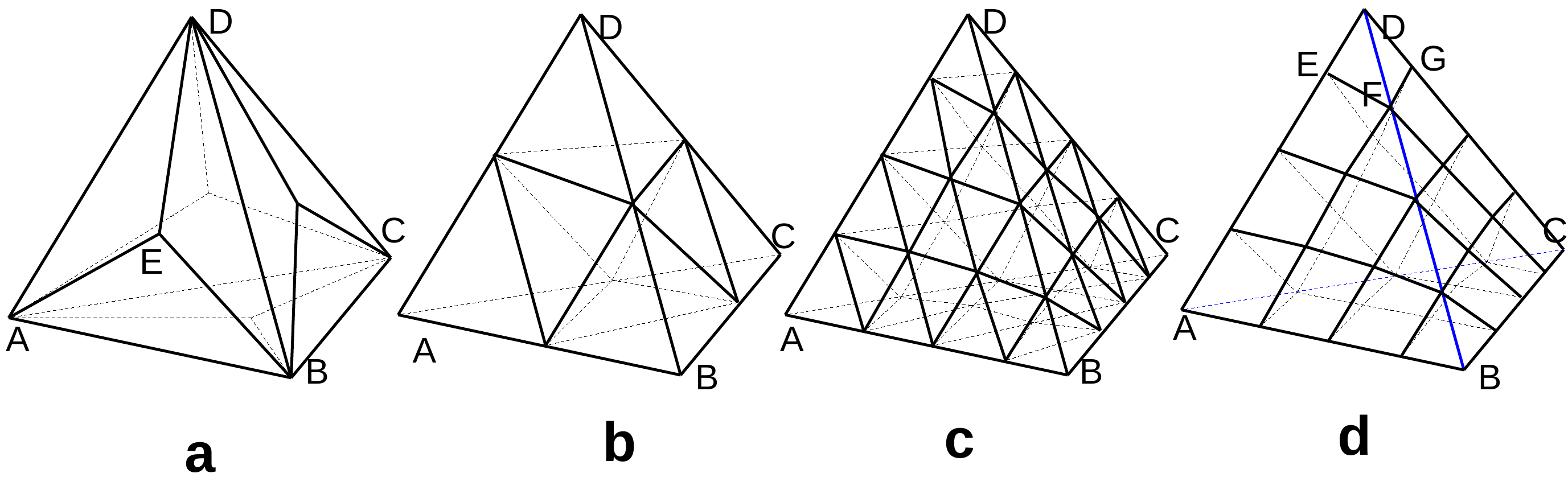}
\caption[Illustration of the criterion for almost-regular polyhedra]{ \textbf{Illustration of the criterion for almost-regular polyhedra}. The polyhedra shown in \textbf{(a)} and \textbf{(d)} are not almost-regular. In \textbf{(a)}, the global symmetries are preserved, but the local symmetries are not (for example, it is not locally 2-fold symmetric around the center of DE). In \textbf{(d)}, local symmetries are intact (note that DEFG and other creased faces are considered a single face), but the global 3-fold symmetries are not (for example, around the center of ABD).}
\label{fig:assemblytheory:quasi}
\end{figure}

While this class bear some similarity with the Catalan solids, an important distinction is that the Catalan solids allow non-symmetric faces (hence, are not isotoxal), as long as all the faces are congruent (for example, the solid in Figure \ref{fig:assemblytheory:quasi}(a) is a Catalan solid, but is not almost-regular.

Note that, the transitivity properties that make regular polyhedra suitable for assembly, is preserved in almost-regular polyhedra, but now the class is richer and more importantly can model structures with more than 20 building blocks.

The dual of an almost-regular polyhedron has regular faces, is isotoxal and isogonal; but may or may not be isohedral (they can hae at most two types of faces). The closest known family is the semi-regular polyhedra (duals of Catalan solids) which are also isotoxal and isogonal. But the semi-regular polyhedra, which includes the 13 Archimedean solids and the family of prisms with regular faces, have at least 2 types of faces.

\subsection{Related Prior Work}
The construction scheme, that we propose in the next section, closely follows the one proposed by Goldberg \cite{Goldberg_1937}. The family of polyhedra generated by the Goldberg construction rule \cite{Goldberg_1937} are fullerene like structures. Fullerene like structures have icosahedral symmetry (symmetry group of the icosahedron), and consists of many hexagonal faces and exactly 12 pentagonal faces. The soccer ball is the smallest example of such structure. See Figure \ref{fig:assemblytheory:construction} for an illustrative description of the construction.

\begin{figure}[h!]
\centering
\includegraphics[width=0.85\linewidth]{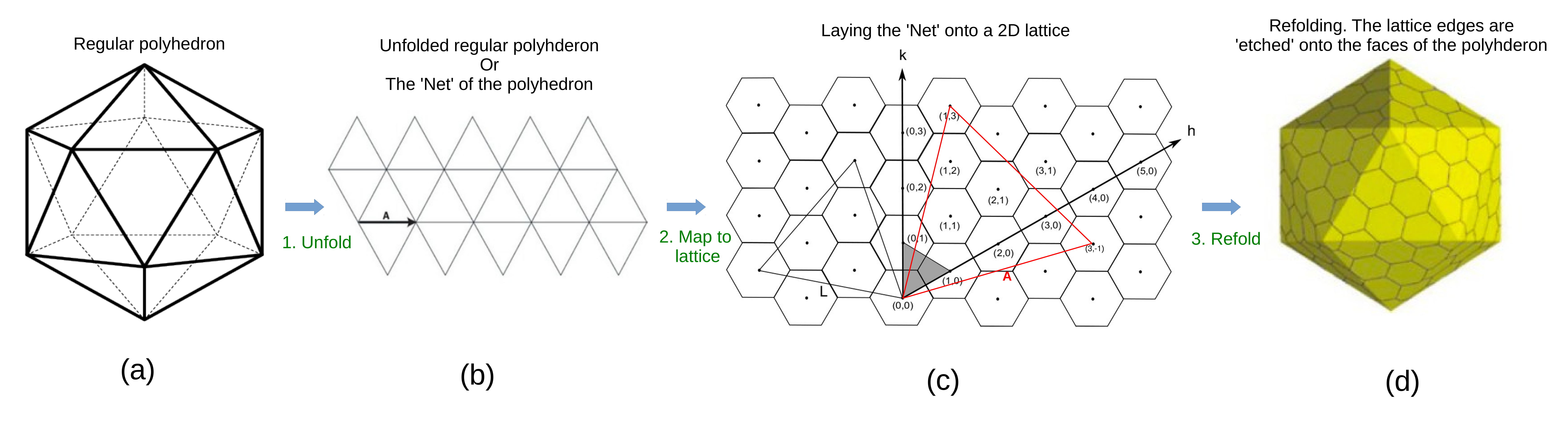}
\caption[Illustration of the Goldberg construction]{ \textbf{Illustration of the Goldberg construction}. The Goldberg construction involves unfolding an icosahedron (see \textbf{(a)} and \textbf{(b)}), and then mapping the unfolded icosahedron onto a 2D hexagonal lattice scaled and oriented such that all corners of the unfolded icosahedron (its original vertices) falls on the centers of some hexagon of the grid (some example scale and orientations (for one triangle only) are shown \textbf{(c)}). Finally, the icosahedron is folded back, along with the hexagonal grid etched onto its faces. For example, for the scaling and orientation of the red triangle in \textbf{(c)}, would result in the tiled icosahedron shown in \textbf{(d)}. Notice that the new polyhedron has exactly 12 regular pentagonal faces where the icosahedral vertices originally were, and many regular hexagonal faces.}   
\label{fig:assemblytheory:construction}
\end{figure}

Caspar and Klug \cite{caspar62} proposed a similar approach, but using a triangular lattice, instead of a hexagonal one, and required that the corners of an edge of the unfolded icosahedron falls on the vertices of the lattice (Figure \ref{fig:assemblytheory:CKdemo}). Since, the triangular lattice is simply the dual of the hexagonal lattice, the mapping is essentially the same. But the refolded polyhedron now has only regular triangular faces. It has 12 vertices where 5 such faces are incident, and many vertices where 6 faces are incident- the first set are exactly the original vertices of the icosahedron. Notice that this polyhedron is exactly the dual of the one constructed using Goldberg's method (shown in Figure \ref{fig:assemblytheory:construction}(d), and also overlayed in Figure \ref{fig:assemblytheory:CKdemo}(b)).

\begin{figure}[h!]
\centering
\includegraphics[width=0.85\linewidth]{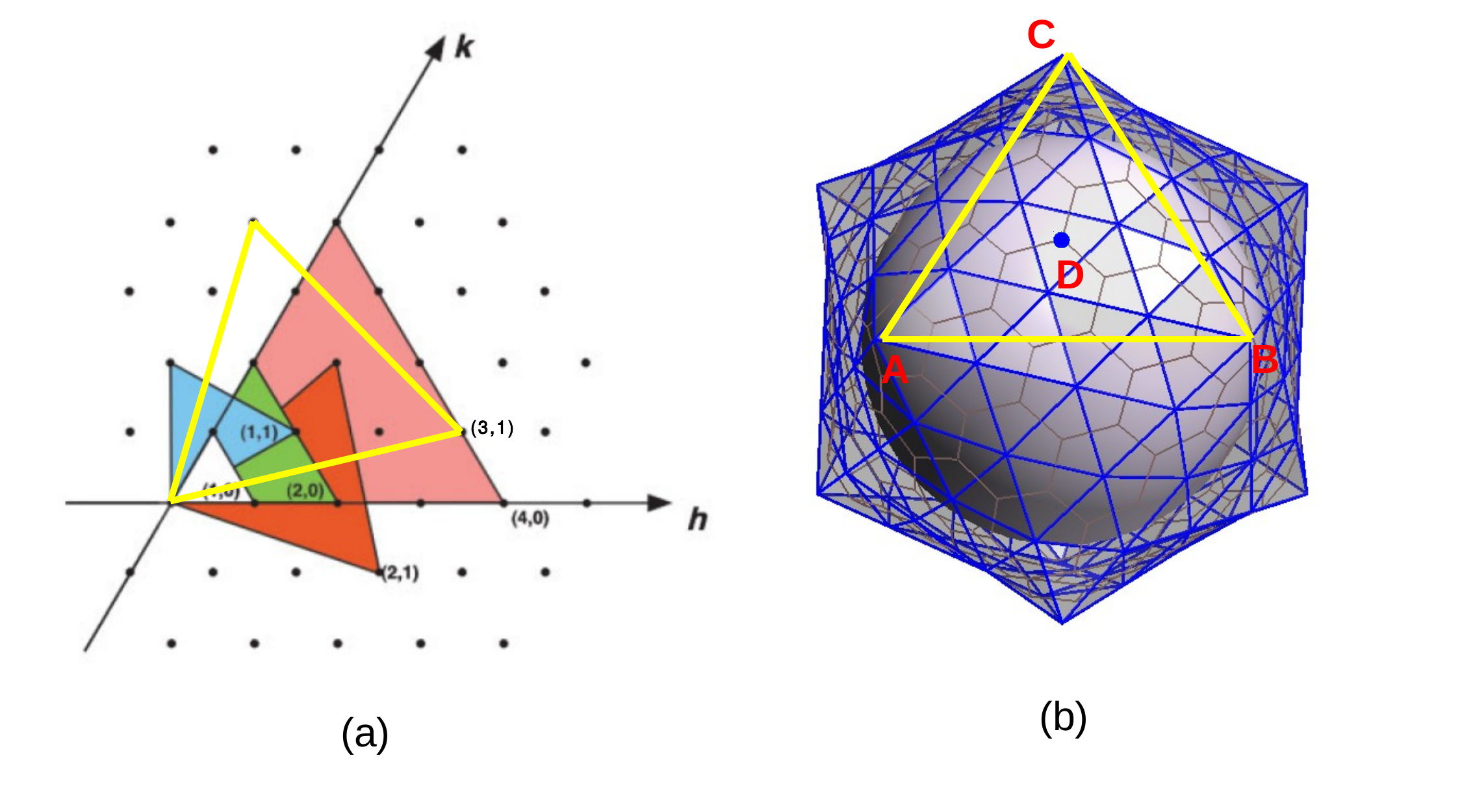}
\caption[Illustration of the Caspar-Klug construction]{ \textbf{Illustration of the Caspar-Klug construction}. The Caspar-Klug construction involves unfolding an icosahedron onto a triangular lattice scaled and oriented such that all corners of the unfolded icosahedron (its original vertices) falls on the vertices of the grid (some example scale and orientations (for one triangle only) are shown \textbf{(a)}). Then, the icosahedron is folded back, along with the grid etched onto its faces. For example, for the scaling and orientation of the yellow triangle in \textbf{(a)}, would result in the tiled icosahedron shown in \textbf{(b)}.}   
\label{fig:assemblytheory:CKdemo}
\end{figure}

Polyhedra produced by Caspar and Klug's construction method are \emph{almost-regular}, and the ones produced by Goldberg's are duals of \emph{almost-regular}. But notice that both Goldberg, and Caspar and Klug focused only on the icosahedral case and considered a specific unfolding onto specific 2D lattices, and hence only covers a fraction of the possible almost-regular polyhedra and their duals. Separately, Pawley \cite{Pawley_1961} studied other ways of wrapping different polyhedra using different lattices from the wallpaper group. However, Pawley did not provide any theoretical characterization of the factors related to the possibility or impossibility of such wrappings. We address this issue in the following section.

\section{Characterizing All Possible \emph{Almost-Regular} Polyhedra}
\label{sec:assemblytheory:enumeration}
Both Goldberg and Caspar-Klug constructions can be expressed as unfolding a regular polyhedron onto a 2D lattice and then refolding it with the lattice etched onto its faces. Pawley's wrapping idea is equivalent. We call these procedures the \emph{unfold-etch-refold} method. Here, we prove the conditions that must be satisfied to produce almost-regular polyhedra using the \emph{unfold-etch-refold} idea for any regular solid, unfolded in any way, onto any 2D lattice.

Shepherd's conjecture \cite{Shephard_1975} states that all convex polyhedra have a non self-overlapping planar unfolding with only edge-cuts. This conjecture is not proved or disproved yet for all possible convex polyhedra. However, it is true for the set of special classes we are interested in. Hence, in principle it is possible to unfold one such polyhedron and lay it down on a 2D grid, use the grid to draw tiles of the unfolded polyhedron, and then fold it back up to get a tiled polyhedron. However, every polyhedron actually have many unfolding. For example, icosahedron have 43380 unique unfoldings. Caspar and Klug's construction produced almost-regular polyhedra using 1 such unfolding, but it is not clear whether other unfoldings would also produce similar \emph{almost-regular} polyhedra, or different types of \emph{almost-regular} polyhedra, or not be \emph{almost-regular}. To address this question, we characterize the relationship of the local and global symmetries of the almost-regular polyhedra, and the etched polyhedra (henceforth called \emph{tiling}) produced using unfold-etch-refold construction.

First of all, we prove that the lattice onto which the polyhedron is unfolded must be regular.

\begin{lemma}
\label{lemma:assemblytheory:regulargrid}
 The polyhedra generated by an unfold-etch-refold using any regular polyhedra and unfolded in any way, cannot be \emph{almost-regular} if a non-regular grid/lattice is used.
\end{lemma}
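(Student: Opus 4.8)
The plan is to prove the contrapositive: if the unfold-etch-refold construction on some regular polyhedron $P$ (with some unfolding onto some grid $\Gamma$) yields an \emph{almost-regular} polyhedron $Q$, then $\Gamma$ must be a \emph{regular} plane tiling, i.e.\ the tiling by equilateral triangles, by squares, or by regular hexagons. First I would single out the faces of $Q$ that lie strictly in the interior of a single unfolded copy of a face of $P$ — those faces disjoint from every fold edge. By construction such a face is exactly one cell of $\Gamma$. Since every \emph{almost-regular} polyhedron has congruent regular faces and is edge-transitive (hence meets edge-to-edge: each edge is a full side of both of its incident faces), every such interior cell must be a regular $k$-gon, all interior cells must be mutually congruent, and they must fit together edge-to-edge.

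The heart of the argument is then the classification of edge-to-edge tilings of a planar region by congruent regular polygons. At any vertex interior to the region, $m$ copies of the interior angle $(k-2)\pi/k$ must sum to $2\pi$, which forces $k-2$ to divide $4$; hence $k\in\{3,4,6\}$ with $m\in\{6,4,3\}$ respectively. This pins down the vertex figure uniquely, and propagating it across the patch shows that $\Gamma$, restricted to the interior of each unfolded face, coincides with the regular triangular, square, or hexagonal tiling. Because $\Gamma$ is a single consistent (wallpaper-group) tiling, agreeing with a regular tiling on an open set forces it to be that regular tiling everywhere. Thus a non-regular $\Gamma$ cannot yield an \emph{almost-regular} $Q$, which is the claim.

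The step I expect to be the main obstacle is making the reduction to ``interior cells'' airtight, and it has two parts. (i) If $\Gamma$ is too coarse to have any cell fitting strictly inside an unfolded face of $P$, the etched polyhedron is just (a copy of) $P$ and there is nothing to prove; so one restricts to grids fine enough that, after discarding the cells clipped by fold edges, a connected patch of full cells remains inside each face of $P$ — this patch is where the local-to-global propagation lives, so one must check such a patch always exists. (ii) One must rule out grids that are \emph{not} edge-to-edge (for example an offset ``brick'' tiling by congruent squares): at a $T$-junction, where a cell corner lands in the relative interior of another cell's side, that side either acquires an extra vertex, making the corresponding face a non-regular polygon, or the refolded object fails to be a polyhedron with an edge-transitive structure along that side; in either case $Q$ is not \emph{almost-regular}. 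Once these two points are dispatched, the angle count does the rest.
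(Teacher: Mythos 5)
Your proposal is correct in substance and shares the paper's starting observation---after unfold-etch-refold the faces of the resulting polyhedron are cells of the lattice, so the almost-regular requirement of congruent, regular, centrally symmetric faces is really a constraint on the lattice---but from that point you take a genuinely more developed route. The paper's proof stops at the assertion that this constraint ``cannot be satisfied if the lattice itself was not regular''; it never actually argues why a non-regular grid must fail. You supply the missing argument: isolate the cells interior to a single unfolded face, note that edge-transitivity forces the tiling to be edge-to-edge, and then run the classical vertex angle count $m\cdot\frac{(k-2)\pi}{k}=2\pi$ to force $k\in\{3,4,6\}$ with a uniquely determined vertex figure, after which periodicity of $\Gamma$ propagates the regular tiling everywhere. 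This buys a real proof of the classification that the paper only states as a separate fact (``there are exactly 3 regular lattices in 2D'') without connecting it to the lemma; your version also explicitly disposes of non-edge-to-edge grids such as brick tilings, which the paper's argument silently ignores.

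The one place your argument is thinner than you acknowledge is case (i). If $\Gamma$ is so coarse that no cell lies strictly inside an unfolded face, the etched object is not in general ``just a copy of $P$'': cells can straddle fold edges and produce creased faces, and your interior-vertex angle count then has nothing to bite on. You would need a separate (easy but not vacuous) argument there---for instance, that a cell crossing a fold edge cannot be a planar regular polygon congruent to the uncreased ones unless the grid already respects the symmetries of the face, which again forces regularity. Since the paper's own proof does not treat this case either, this is a refinement rather than a fatal gap, but it should be closed before the reduction to interior cells can be called airtight.
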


\begin{proof}
 The unfold-etch-refold essentially wraps the lattice/grid over a regular polyhedron. Hence, vertices, edges and faces of the regular polyhedron are suppressed, and a new set of vertices, edges and faces appear (all of which belong to the lattice). Now, the \emph{local} symmetry condition of the almost-regular polyhedron requires that every face be symmetric around its center and be congruent to each other. This cannot be satisfied if the lattice itself was not regular, or not symmetric around some points. 
\end{proof}

There are exactly 3 regular lattices in 2D- the square lattice, the triangular lattice and the hexagonal lattice. The square lattice has 4-fold rotational symmetries at each vertex and face-center, the triangular lattice has 6-fold and 3-fold symmetries at each vertex and face-center; and the hexagonal lattice has 3-fold and 6-fold symmetries at each vertex and face-center. All of them have 2-fold symmetry on the center of each edge.

Among regular polyhedra, the tetrahedron, the octahedron and the icosahedron have 3-fold symmetries at face-centers and respectively 3, 4 and 5-fold symmetries at vertices. The cube has 4-fold symmetries at vertices and face-centers. The dodecahedron has 3-fold and 5-fold symmetries at vertices and face-centers. 

Now we prove another lemma addressing global symmetry conditions at gf-, ge-, gv-symmetry axes.

\begin{lemma}
\label{lemma:assemblytheory:localglobal}
 To satisfy gf-symmetry conditions, all gf-axes must go through a point of the lattice that have $cn$-fold rotational symmetry where $n$ is the order of rotational symmetry around the gf-symmetry axes of the regular polyhedron and $c$ is a positive integer.
\end{lemma}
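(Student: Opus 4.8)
The plan is to turn the global gf-symmetry requirement into a purely two-dimensional statement about the lattice, using the fact that at a face-centre --- in contrast to a vertex --- the refolded polyhedron carries no cone point and is genuinely flat. Let $R$ be the regular polyhedron that is unfolded, fix one of its gf-axes, and let $c_F$ be the centre of the face $F$ through which it passes; by definition $R$ is invariant under the rotation $\rho$ of order $n$ about this axis, where $n$ is the rotational order of $R$ about the gf-axis (equivalently the number of sides of the regular polygon $F$). In the unfold-etch-refold construction, $F$ is laid isometrically onto a planar region $\widetilde F$ carrying a lattice $\Lambda$, which is one of the three regular lattices by Lemma~\ref{lemma:assemblytheory:regulargrid}; the tiling etched onto $F$ is the pullback of $\Lambda \cap \widetilde F$, and $c_F$ is carried to some point $p \in \widetilde F$. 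Since the refolded polyhedron $P$ must be almost-regular, it must in particular admit $\rho$ as a symmetry (the gf-symmetry condition), and I will show this forces $p$ to be a point of $\Lambda$ whose rotational order is a multiple of $n$, i.e. equals $cn$ for a positive integer $c$.

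I would proceed in four steps. (1) A neighbourhood of $c_F$ inside $F$ is flat, and because the gf-axis is normal to $F$ and fixes $c_F$, the restriction of $\rho$ there is rotation by $2\pi/n$ about $c_F$; transporting through the unfolding isometry, this is the planar rotation $\widetilde\rho$ by $2\pi/n$ about $p$. (2) Because $\rho$ is a symmetry of $P$ and the pattern etched near $c_F$ is a congruent copy of $\Lambda$ near $p$, the part of $\Lambda$ lying in the etched region around $p$ is invariant under $\widetilde\rho$. (3) The etched region around $p$ contains a full lattice cell and hence a set of lattice translations generating $\Lambda$; a rotation about $p$ that preserves such a region therefore preserves all of $\Lambda$, so $\widetilde\rho$ --- a rotation of order $n$ --- belongs to the point group of $\Lambda$ at $p$. (4) That point group is a finite cyclic group $C_k$; it contains an element of order $n$ precisely when $n \mid k$, so $k = cn$. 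Reading off the three regular lattices (order $4$ at vertices and face-centres of the square lattice, orders $6$ and $3$ at vertices and face-centres of the triangular and of the hexagonal lattice, order $2$ at every edge-midpoint, and order $1$ elsewhere) then identifies exactly which points $p$ may be, and in particular shows no valid $p$ exists when $n = 5$.

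I expect step~(3) --- promoting the local $\widetilde\rho$-invariance of step~(2) to a genuine symmetry of the whole lattice --- to be the delicate point, since it relies on the etched neighbourhood of $c_F$ being large enough to ``see'' a generating set of $\Lambda$. The clean way to secure this is to observe that the faces of $P$ incident to a neighbourhood of $c_F$ are complete cells of $\Lambda$, so they unfold to a union of complete lattice cells surrounding $p$, which already contains a pair of non-collinear shortest lattice vectors; since each of the three regular lattices is generated by its shortest vectors, any rotation about $p$ fixing that configuration fixes $\Lambda$. A secondary bookkeeping point, which I would state explicitly rather than prove, is the classification of rotational orders at points of the three regular lattices used in step~(4); this is standard and is the same enumeration already recorded just before the lemma.
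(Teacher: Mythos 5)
Your proof is correct and rests on the same underlying idea as the paper's: reduce the $n$-fold gf-symmetry of the refolded polyhedron to the requirement that the induced planar rotation of order $n$ about the image $p$ of the face centre be a symmetry of the lattice, so that the rotational order of $\Lambda$ at $p$ must be a multiple $cn$ of $n$. The paper's own proof is only a three-case assertion (the axis lands on a lattice vertex, on a lattice face centre, or elsewhere) with no justification for any case; your steps (3)--(4) --- promoting local invariance of the etched germ at $p$ to a genuine symmetry of $\Lambda$ via a full generating cell, then reading off the cyclic point group --- supply exactly the argument the paper leaves implicit, and you correctly flag the local-to-global promotion as the one step that needs care.
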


\begin{proof}
 Recall that an almost-regular must have exactly the same number of rotational symmetry axes with the same symmetry orders as any specific regular polyhedron, such that there exists a rigid body transformation that perfectly aligns these axes to those of the regular polyhedron. Since, the unfold-etch-refold is a wrapping, the expected locations and axes of global symmetry of the new tiled/etched polyhedra, and the underlying regular polyhedra are already aligned. For example, in Figure \ref{fig:assemblytheory:CKdemo}(b), A, B and C are locations of 5-fold global symmetry and D is a location of 3-fold local symmetry. Let, $D$ be the location of one axis of symmetry going through the face of the regular polyhedron and $n$ be its symmetry order. Depending on the chosen unfolding and mapping, a particular gf-symmetry axis can have the following three cases-

 \begin{itemize}
  \item If a gf-symmetry axis goes through a vertex of the tiled polyhedra, then the tiled polyhedra can be $n$-fold symmetric around that axis only if the lattice have $cn$-fold rotational symmetry around its vertices.
  \item If a gf-symmetry axis goes through the center of a face of the tiled polyhedra, then the tiled polyhedra can be $n$-fold symmetric around that axis only if the lattice face is $cn$-regular.
  \item If a gf-symmetry does not go through a vertex or a face-center, then the tiled polyhedra can not be $n$-fold rotational symmetric around the axis irrespective of the symmetry of the regular grid. 
 \end{itemize}

 Hence, the lemma is proved for any unfolding/mapping.
\end{proof}

A regular 2D grid/lattice can be parameterized using two vectors $e_1, e_2$, and a fixed origin $O$ as follows. Let, $O$ be a vertex of the lattice and that $O$ has at least 2 neighbors $U$ and $V$ such that the vectors $O\rightarrow U$ and $O\rightarrow V$ are not collinear. Then, defining $e_1 = O\rightarrow U$ and $e_2 = O\rightarrow V$, every point of the lattice can be expressed as linear combinations $he_1 + ke_2$ where $h$ and $k$ are integers. This defines a coordinate system $\mathcal{L}$ with $O$ as the origin and $e_1, e_2$ as the primary axes and a co-ordinate $(h,k)$ representing points on the 2D plane, such that if both $h$ and $k$ are integers then, the point lies on the lattice (is a lattice vertex).

\begin{figure}[h!]
\centering
\includegraphics[width=0.85\linewidth]{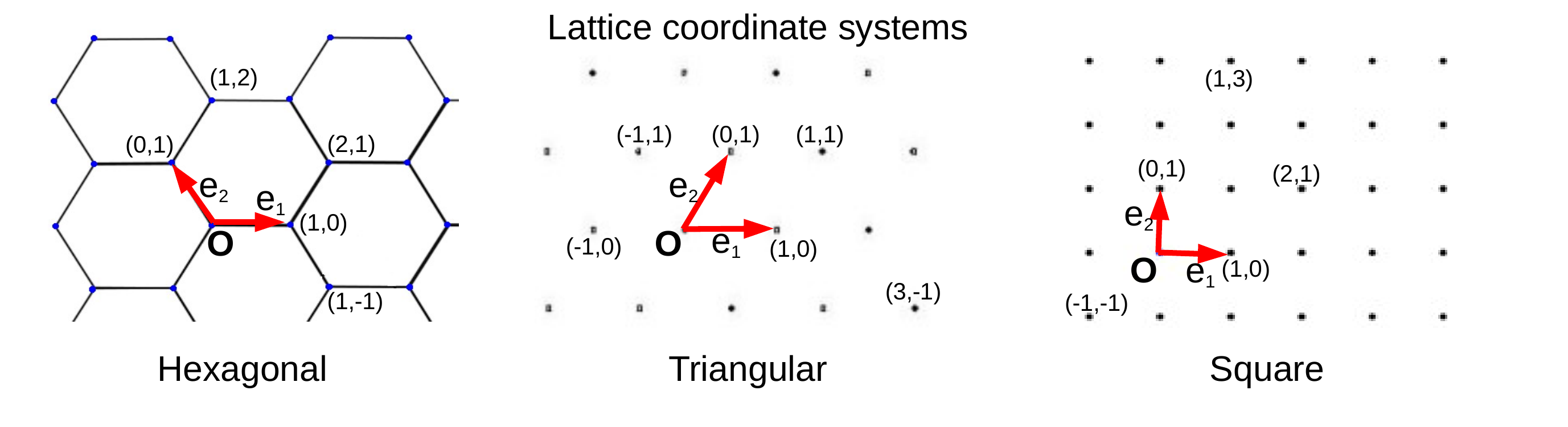}
\caption[Coordinate systems for the 2D regular lattices]{ \textbf{Coordinate systems for the 2D regular lattices}. In the figures, $O$ is the origin and $e_1$ and $e_2$ are two coordinate axes. Any point $(x,y)$ can be reached by vector $xe_1+y_e2$ from the origin. Coordinates of some of the lattice points are shown.}   
\label{fig:assemblytheory:coords}
\end{figure}

\begin{lemma}
 \label{lemma:assemblytheory:sufficient0}
 A lattice described in the $\mathcal{L}$ coordinate system is 2-fold rotationally symmetric around a point $(x,y)$ for the following cases-
 \begin{itemize}
  \item Triangular lattice: if and only if both $2x$ and $2y$ are integers
  \item Square lattice: if and only if both $2x$ and $2y$ are integers
  \item Hexagonal lattice: if and only if both $2x$ and $2y$ are odd integers, or $x=2(y+1)-3k$ where $x$, $y$ and $d$ are integers.
 \end{itemize}
\end{lemma}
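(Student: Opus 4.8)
The plan is to translate the geometric condition into a single statement about an affine involution and then reduce it, for each of the three lattices, to a short arithmetic check. Recall that a point set $L$ is $2$-fold rotationally symmetric about $p=(x,y)$ exactly when the point reflection $\sigma_p\colon q\mapsto 2p-q$ satisfies $\sigma_p(L)=L$; since $\sigma_p$ is an involution it is enough to verify $\sigma_p(L)\subseteq L$. Thus the lemma reduces to determining for which $(x,y)$ the map $\sigma_p$ is a self-map of the vertex set, written in the $\mathcal{L}$-coordinates of Figure~\ref{fig:assemblytheory:coords}.

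First I would pin down that vertex set. The triangular and the square lattices are Bravais lattices, and the two chosen non-collinear minimal edge vectors $e_1,e_2$ out of $O$ form a lattice basis, so in coordinates the vertex set is simply $\mathbb{Z}^2$. The hexagonal (honeycomb) lattice is the genuinely different case: it is not a Bravais lattice. With $e_1=\vec{OU}$ and $e_2=\vec{OV}$, the third edge out of $O$ is $-e_1-e_2$, and propagating this along edges shows that the vertices are exactly the integer points $(h,k)$ whose coordinate sum $h+k$ avoids one fixed residue class modulo $3$: the two triangular sublattices of vertices sit at $h+k\equiv 0$ and $h+k\equiv 1$, while the omitted class $h+k\equiv 2$ is occupied by the hexagon centers.

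For the triangular and square lattices the argument is then immediate in both directions: $\sigma_p(\mathbb{Z}^2)=2p-\mathbb{Z}^2$, and this equals $\mathbb{Z}^2$ iff $2p\in\mathbb{Z}^2$, i.e.\ iff both $2x$ and $2y$ are integers; necessity follows from evaluating $\sigma_p$ at the origin $O\in L$, whose image $2p$ must itself be a lattice vertex. For the honeycomb, the same origin test again forces $2x,2y\in\mathbb{Z}$; once this holds, $\sigma_p$ already permutes $\mathbb{Z}^2$, and the only remaining requirement is that it preserve the residue condition on the coordinate sum. Since the image of $(h,k)$ has coordinate sum $(2x+2y)-(h+k)$, and $h+k$ ranges over the two allowed residues, the images stay vertices precisely when $2x+2y\equiv 1\pmod 3$. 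Finally I would unpack the combined condition $2x,2y\in\mathbb{Z}$ and $2x+2y\equiv1\pmod3$ into the two alternatives of the statement: the half-integer solutions (those with $2x$ and $2y$ both odd) correspond to edge-midpoint centers, and the integer solutions satisfy $x+y\equiv2\pmod3$, equivalently $x=2(y+1)-3k$ for some integer $k$, and correspond to hexagon-center centers, all in the normalization of Figure~\ref{fig:assemblytheory:coords}.

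The routine parts are the two Bravais cases and this final bookkeeping. The step I expect to be the real obstacle is the honeycomb case: because the lattice is not generated over $\mathbb{Z}$ by $e_1,e_2$, the clean identity $\sigma_p(\mathbb{Z}^2)=2p-\mathbb{Z}^2$ no longer settles the matter, and one must carry along the mod-$3$ class of the coordinate sum and then reconcile the resulting congruence with the geometric description (edge midpoints versus hexagon centers) so that the two stated cases come out exactly.
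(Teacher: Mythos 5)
Your reduction to the point reflection $\sigma_p(q)=2p-q$ is sound, and it is a genuinely different and more self-contained route than the paper's, which simply enumerates the known $2$-fold centres of each lattice (vertices, edge-centres, face-centres) and asserts their coordinates without argument. The two Bravais cases go through. The gap is the final ``bookkeeping'' step for the honeycomb: the condition you correctly derive, namely $2x,2y\in\mathbb{Z}$ together with $2x+2y\equiv 1\pmod 3$, does \emph{not} unpack into the two alternatives of the statement, and the reconciliation you defer cannot actually be carried out. Concretely, with $e_1,e_2$ the edge vectors out of $O$ as in the paper's setup, the midpoint of the edge from $O=(0,0)$ to $U=(1,0)$ is $(\tfrac{1}{2},0)$: it satisfies your condition ($2x=1$, $2y=0$, $2x+2y=1$) and is certainly a $2$-fold centre, yet $2y=0$ is even and $x\notin\mathbb{Z}$, so it falls under neither clause of the lemma. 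Conversely, $(\tfrac{1}{2},\tfrac{1}{2})$ has both $2x$ and $2y$ odd, hence satisfies the lemma's first clause, but $2x+2y=2\equiv 2\pmod 3$; indeed $\sigma_{(1/2,1/2)}$ sends the vertex $(0,0)$ to $(1,1)$, which is a hexagon centre rather than a vertex, so it is not a $2$-fold centre. The root of the problem is your parenthetical ``the half-integer solutions (those with $2x$ and $2y$ both odd)'': it omits the mixed solutions with exactly one non-integer coordinate. Only the edges in the $-(e_1+e_2)$ direction have midpoints with both coordinates half-odd; the edges along $e_1$ and $e_2$ give mixed midpoints.

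In other words, your congruence $2x,2y\in\mathbb{Z}$, $2x+2y\equiv 1\pmod 3$ appears to be the \emph{correct} characterization under the coordinate convention described in the text, and your proof, completed honestly, establishes a corrected version of the hexagonal clause rather than the clause as printed; the paper's one-line justification shares the same defect in its coordinate bookkeeping. To close your argument you should either state and prove the derived condition and flag explicitly that it differs from the printed statement, or verify which normalization of $(e_1,e_2)$ the figure actually uses for the hexagonal lattice, since only a nonstandard choice could rescue the printed clauses. It would also be worth adding the one-line observation that invariance of the vertex set under the isometry $\sigma_p$ forces invariance of edges and faces (edges join nearest neighbours), so that vertex-set invariance really is equivalent to $2$-fold symmetry of the tiling.
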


\begin{proof}
 Triangular lattices have 2-fold (or multiples of 2-fold) symmetries around their vertices and edge-centers, both of which satisfy the condition that $2x$ and $2y$ are integers. Also, no other point satisfy these conditions.
 Square lattices have 2-fold (or multiples of 2-fold) symmetries around their vertices, face-centers and edge-centers, all of which satisfy the condition that $2x$ and $2y$ are integers. Also, no other point satisfy these conditions.
 Hexagonal lattices have 2-fold (or multiples of 2-fold) symmetries around their face-centers and edge-centers. The edge-centers satisfy the condition that both $2x$ and $2y$ are odd integers. The face centers are integer solutions of $x$, $y$ for the family of lines $x=2(y+1)-3d$ where $d$ is an integer. No other point satisfy either of these conditions.
\end{proof}

\begin{lemma}
 \label{lemma:assemblytheory:sufficient1}
 If a face $\mathcal{T}$ of a regular polyhedron is mapped to a 2D regular lattice in the following ways, then the part of lattice inside $\mathcal{T}$ is $n$-fold rotational symmetric around the center of $\mathcal{T}$, where $n$ is the rotational symmetry around the gf-axes of the regular polyhedron, and the set of faces intersected by each edge of $\mathcal{T}$ is $2$-fold rotationally symmetric around the center of the edge (location of the ge-axes of the regular polyhedron).
 \begin{enumerate}
  \item If $\mathcal{T}$ belongs to either a tetrahedron, an octahedron or a icosahedron and is mapped to triangular lattice, such that all corners have integer coordinates.
  \item If $\mathcal{T}$ belongs to either a tetrahedron, a octahedron or a icosahedron and is mapped to a hexagonal lattice such that all corners fall on face centers of the lattice (have integer coordinates $(x,y)$ such that $x=2(y+1)-3d$ where $x$, $y$ and $d$ are integers).
  \item If $\mathcal{T}$ belongs to a cube and is mapped to a square lattice, such that either all the corners fall on vertices of the lattice, or all the corners fall on face-centers of the lattice.
 \end{enumerate}
\end{lemma}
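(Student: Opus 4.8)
The plan is to reduce all three cases to one observation. If a rotation $\rho$ maps both the lattice $L$ and the face $\mathcal{T}$ (as a planar region) onto themselves, then $\rho$ permutes the lattice cells, edges and vertices lying inside $\mathcal{T}$, so the etched pattern inside $\mathcal{T}$ inherits the symmetry $\rho$; and, likewise, if $\sigma$ is the $180^\circ$ rotation about the midpoint $M$ of an edge $e$ of $\mathcal{T}$ and $\sigma(L)=L$, then $\sigma$ maps the segment $e$ onto itself, hence permutes the set of lattice faces that $e$ crosses (here no symmetry of $\mathcal{T}$ is needed). Consequently it suffices to establish two facts about where the mapping sends distinguished points of $\mathcal{T}$: \textbf{(a)} the center of $\mathcal{T}$ goes to a point of $L$ whose rotational symmetry order is a multiple of $n$; and \textbf{(b)} every edge-midpoint of $\mathcal{T}$ goes to a point of $L$ of even rotational order at least $2$. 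Indeed, \textbf{(a)} makes the rotation $\rho$ by $2\pi/n$ about the center of $\mathcal{T}$ a symmetry of $L$, and $\rho$ is automatically a symmetry of the regular $n$-gon $\mathcal{T}$ (with $n=3$ for the triangular faces of the tetrahedron, octahedron and icosahedron, and $n=4$ for the square faces of the cube); and \textbf{(b)} makes the $180^\circ$ rotation about each edge-midpoint a symmetry of $L$.

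To prove \textbf{(a)} and \textbf{(b)} I would first eliminate the ``corners on face-centers'' variants. The face-centers of the square lattice form a translate of the square lattice, and the face-centers of the hexagonal (honeycomb) lattice form a triangular lattice $\Lambda$; so in each case the corners of $\mathcal{T}$ sit on the vertices of a regular lattice that is a scaled copy of the standard triangular or square lattice, and it suffices to argue there and translate the conclusion back. The only bookkeeping is a dictionary between the special points of $\Lambda$ and of the honeycomb: a vertex of $\Lambda$ is a hexagon-center, hence a $6$-fold point of the honeycomb; the centroid of a fundamental triangle of $\Lambda$ is the honeycomb vertex where the three corresponding hexagons meet, hence a $3$-fold point; and the midpoint of an edge of $\Lambda$ is the midpoint of the honeycomb edge shared by the two corresponding hexagons, hence a $2$-fold point. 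Since these properties persist under $\Lambda$-translations, every vertex, edge-midpoint and fundamental-triangle centroid of $\Lambda$ is a honeycomb symmetry center of order at least $2$, the order being divisible by $3$ in the vertex and centroid cases.

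The computational core is to locate the center and the edge-midpoints of $\mathcal{T}$ in the triangular and square lattices. Fix a corner $A$ of $\mathcal{T}$ and let $u$ be the lattice vector from $A$ to a neighbouring corner; since $\mathcal{T}$ is a regular $n$-gon, the other edge of $\mathcal{T}$ at $A$ is $Ru$, where $R$ is the lattice rotation through the interior angle of $\mathcal{T}$ ($60^\circ$ when $n=3$, $90^\circ$ when $n=4$) --- a genuine lattice automorphism, which is exactly why the dodecahedron is excluded. Summing edge-vectors around $\mathcal{T}$, its center is $A+\tfrac1m(I+R)u$, where $m=\det(I+R)$ equals $3$ for the triangular lattice and $2$ for the square lattice. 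A short computation with $u=(p,q)\in\Z^{2}$ then shows the center always lands on a point $(x,y)$ with $3x\in\Z$ and $x-y\in\Z$ in the triangular case --- so $(x,y)$ is congruent mod $\Z^{2}$ to $(0,0)$, $(\tfrac13,\tfrac13)$ or $(\tfrac23,\tfrac23)$, i.e.\ a $6$-fold or $3$-fold center --- and with $2x,2y\in\Z$ and $x-y\in\Z$ in the square case --- so $(x,y)$ is congruent to $(0,0)$ or $(\tfrac12,\tfrac12)$, i.e.\ a $4$-fold center. For the edge-midpoints: the midpoint of the edge $AB$ is $A+\tfrac12u$, which has half-integer $\mathcal{L}$-coordinates, so by Lemma \ref{lemma:assemblytheory:sufficient0} it is a $2$-fold center of the triangular (respectively square) lattice; every other edge of $\mathcal{T}$ is the image of $AB$ under the rotation about the center of $\mathcal{T}$, which is a lattice automorphism, so the same holds for all edge-midpoints. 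Feeding \textbf{(a)} and \textbf{(b)} into the observation of the first paragraph proves the lemma for the triangular- and square-lattice mappings, and pushing the conclusion through the dictionary of the second paragraph proves it for the honeycomb and the face-center cases.

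The step I expect to be the main obstacle is the computation in the third paragraph --- verifying that the center of $\mathcal{T}$ never lands on a ``spurious'' lattice point of lower symmetry (for instance a point like $(\tfrac13,\tfrac23)$ on the triangular lattice, which is \emph{not} a rotation center at all), which is exactly what the determinant structure of $I+R$ controls --- together with the honeycomb dictionary, since there the symmetry centers are spread less uniformly than on the triangular or square lattices and one must confirm that midpoints and centroids of \emph{arbitrary} (not merely adjacent) pairs and triples of hexagon-centers still land on honeycomb symmetry centers. Everything else is either the soft symmetry-inheritance observation of the first paragraph or a direct appeal to Lemma \ref{lemma:assemblytheory:sufficient0}.
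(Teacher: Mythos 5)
Your proposal is correct and takes essentially the same route as the paper: verify by direct coordinate arithmetic that the center of $\mathcal{T}$ lands on a lattice rotation center of order a multiple of $n$ and that each edge-midpoint lands on a $2$-fold center (via Lemma \ref{lemma:assemblytheory:sufficient0}), then let those rotations act on the etching. Your write-up is in fact more complete than the paper's, which only works out the triangular-vertex case explicitly and dispatches the square and face-center (dual) variants with ``similar arithmetic''; your uniform center formula $\tfrac1m(I+R)u$ with $m=\det(I+R)$ and the honeycomb/face-center dictionary supply exactly those omitted details.
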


\begin{proof}
 First, we show that such mapping is possible.
 \begin{enumerate}
  \item Consider the case when $\mathcal{T}$ is an equilateral triangular face being placed on a triangular lattice. Without loss of generality, we assume that one corner is placed at the origin and another at $(h,k)$ where $h$ and $k$ are integers. Then, it is trivial to show that the third point can be at $(h+k,-h)$.
  \item Consider the case when $\mathcal{T}$ is an equilateral triangular face being placed on a hexagonal lattice. Note that the hexagonal lattice is simply the dual of the triangular lattice, hence a mapping that puts corners of $\mathcal{T}$ on vertices of the triangular lattice would put corners of $\mathcal{T}$ on face-centers of the hexagonal lattice.
 \item Consider the case when $\mathcal{T}$ is a square being placed on a square lattice. Without loss of generality, we assume that one corner is placed at the origin and another at $(h,k)$ where $h$ and $k$ are integers. Then, it is trivial to show that the other points can be placed at $(h-k, h+k)$ and $(-k, h)$.
 \end{enumerate}

 When an equilateral triangular face $\mathcal{T}=ABC$ is placed on a triangular lattice such that the corners are at $(0,0), (h,k)$ and $(h+k,-h)$, the the center $O$ of $\mathcal{T}$ is at $(\frac{2h+k}{3},\frac{-h+k}{3})$ which is at a face center, or at a vertex (if $h=k$, or $k=-2h$). Hence $O$ falls on a location of 3-fold or 6-fold symmetry. Hence, gf-symmetry is satisfied. The center $(x,y)$ of any edge would satisfy the condition that $2x$ and $2y$ are integers, and hence falls on a location that have 2-fold (or 6-fold) symmetry and satisfies the ge-condition.

 Similar arithmetic can be applied to prove the theorem for the remaining two cases.
\end{proof}

We shall refer to the mapping described in Lemma \ref{lemma:assemblytheory:sufficient1} a \emph{compatible mapping}.

\begin{lemma}
\label{lemma:assemblytheory:sufficient2}
 For any unfolding of a regular polyhedron onto a regular lattice, if any face $\mathcal{T}$ is compatibly mapped, then all faces are also compatibly mapped. And, the etching inside each face are congruent.
\end{lemma}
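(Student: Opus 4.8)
The plan is to induct on the combinatorial structure of the chosen unfolding. Recall that an unfolding corresponds to a spanning tree of the dual graph of the regular polyhedron: two faces adjacent in this tree are laid out in the plane sharing a common edge and lying on opposite sides of the line carrying that edge. Since every face of the polyhedron is a regular polygon (an equilateral triangle for the tetrahedron/octahedron/icosahedron, a square for the cube) and the two planar copies are congruent, the two adjacent faces are related by the $180^\circ$ rotation $\rho_e$ about the midpoint $M_e$ of their shared edge $e$: this rotation fixes the line of $e$, swaps its two endpoints and swaps the two half-planes, so it carries the regular polygon on one side to the unique congruent regular polygon on the other side having $e$ as an edge, which is exactly the adjacent face. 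Thus $\mathcal{T}' = \rho_e(\mathcal{T})$ as planar regions whenever $\mathcal{T}$ and $\mathcal{T}'$ are adjacent across $e$ in the unfolding.

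First I would establish the base geometric fact: if a face $\mathcal{T}$ is compatibly mapped, then for every edge $e$ of $\mathcal{T}$ the midpoint $M_e$ is a point of $2$-fold rotational symmetry of the lattice, so that $\rho_e$ is a symmetry of the whole tiling. For the triangular-lattice case this is already contained in the proof of Lemma~\ref{lemma:assemblytheory:sufficient1}: the endpoints of $e$ have integer coordinates, hence $M_e$ has coordinates $(x,y)$ with $2x,2y\in\Z$, which by Lemma~\ref{lemma:assemblytheory:sufficient0} is a $2$-fold symmetry center. The square-lattice case (corners on vertices, or corners on face centers) is identical, again by Lemma~\ref{lemma:assemblytheory:sufficient0}. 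In the hexagonal-lattice case the endpoints of $e$ are face centers of the honeycomb, i.e. vertices of the dual triangular lattice; their midpoint $M_e$ is then either a vertex of that dual triangular lattice — hence a honeycomb face center, a $2$-fold symmetry center — or the midpoint of a dual-triangular-lattice edge, which coincides with the midpoint of an edge of the honeycomb and is therefore also a $2$-fold symmetry center by Lemma~\ref{lemma:assemblytheory:sufficient0}. In every case $\rho_e$ is a symmetry of the tiling, so it permutes the vertices, edges and faces of the lattice.

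Next I would check that $\rho_e$ preserves the set of admissible corner locations. For the triangular and hexagonal lattices there is only one admissible set (all lattice vertices, respectively all honeycomb face centers), and since $\rho_e$ permutes vertices and permutes honeycomb faces it preserves this set. For the square lattice the admissible set is either all lattice vertices or all lattice face centers, and $\rho_e$ maps each of these two sets to itself for the same reason. Combining this with $\mathcal{T}' = \rho_e(\mathcal{T})$: the corners of $\mathcal{T}'$ are the $\rho_e$-images of the corners of $\mathcal{T}$, hence lie on admissible locations, so $\mathcal{T}'$ is compatibly mapped; and the etching of $\mathcal{T}'$, i.e. the intersection of the lattice (its edge set) with the region $\mathcal{T}'$, equals $\rho_e$ applied to the intersection of the lattice with $\mathcal{T}$, hence is congruent to the etching of $\mathcal{T}$.

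Finally I would invoke connectivity. The spanning tree defining the unfolding joins all faces, so starting from the given compatibly mapped face and walking along tree edges, the previous paragraph yields by induction that every face is compatibly mapped. For congruence of etchings of arbitrary faces $F$ and $F'$, compose the rotations $\rho_e$ along the unique tree path from $F$ to $F'$; the composite is again a symmetry of the tiling and carries the region $F$ onto the region $F'$, hence carries the etching of $F$ onto the etching of $F'$. I expect the only delicate point to be the bookkeeping in the hexagonal case — verifying that the midpoint of a segment between two honeycomb face centers is always a $2$-fold symmetry center of the honeycomb — which is exactly the case split recorded above; everything else follows directly from Lemmas~\ref{lemma:assemblytheory:sufficient0} and~\ref{lemma:assemblytheory:sufficient1} together with connectivity of the unfolding.
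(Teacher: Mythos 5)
Your proposal is correct and follows essentially the same route as the paper's own proof: adjacent faces in the unfolding are related by the $180^\circ$ rotation about the midpoint of their shared edge, which Lemma~\ref{lemma:assemblytheory:sufficient0} shows is a $2$-fold symmetry center of the lattice, and the claim propagates to all faces by connectivity of the unfolding. You simply make explicit some details the paper leaves implicit (the spanning-tree induction and the verification that the midpoint between two honeycomb face centers is a $2$-fold center), which is a welcome tightening rather than a different argument.
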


\begin{proof}
 For any unfolding, there must be at least another face adjacent to $\mathcal{T}$ and shares an edge with it. Let that edge be $AB$. According to Lemma \ref{lemma:assemblytheory:sufficient0}, for each other point $P$ belonging to $\mathcal{T}$, there exists a point $P'$ produced by rotating $P$ around the center of $AB$ by 180 degrees. Also, if $P$ was on a vertex (or a face-center), then $P'$ will also be the same. Since only a rigid body motion is applied, the new face $\mathcal{T}' = ABP'\ldots$ will also be regular and be congruent to $\mathcal{T}$ (i.e. it is exactly the unfolded face which was neighboring $\mathcal{T}$ along $AB$). Also, for any point $X$ in $\mathcal{T}$, the same transformation would map it to a point $X'$ inside $\mathcal{T}'$ such that the location of $X$ with respect to the corners of $\mathcal{T}$ is the same as the location of $X'$ with respect to the corners of $\mathcal{T}'$. Hence the etching inside $\mathcal{T}$ and $\mathcal{T}'$ are congruent. By propagation of the same argument, all unfolded faces have corners at integer coordinates and are congruent, irrespective of the unfolding.
\end{proof}

\begin{lemma}
\label{lemma:assemblytheory:sufficient3}
 For any unfolding of a regular polyhedron onto a regular lattice, if any face $\mathcal{T}$ is compatibly mapped, then the resulting polyhedron will be \emph{almost-regular} or a dual.
\end{lemma}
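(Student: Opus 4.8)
The plan is to verify in turn the properties defining an \emph{almost-regular} polyhedron --- congruent regular faces, the full global polyhedral symmetry, and face- and edge-transitivity --- and, when the first and third fail, the properties defining its dual (regular faces, global symmetry, vertex- and edge-transitivity); the compatible-mapping hypothesis together with Lemmas~\ref{lemma:assemblytheory:sufficient0}--\ref{lemma:assemblytheory:sufficient2} will supply each of them. I would first fix the combinatorial picture. By Lemma~\ref{lemma:assemblytheory:sufficient2}, every face of the regular solid carries a congruent copy of the same etching, glued to its neighbours along the original edges by the $180^\circ$ rotation about the edge-center, which by Lemma~\ref{lemma:assemblytheory:sufficient0} preserves the lattice. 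Hence the faces of the refolded polyhedron are of three kinds: cells interior to an original face, cells straddling an original edge, and cells surrounding an original vertex. The interior cells are congruent regular polygons by the case hypotheses of Lemma~\ref{lemma:assemblytheory:sufficient1} (small triangles, squares, or hexagons); the edge-straddling cells are again complete lattice cells --- hence regular and of the same type --- because the edge-center rotation completes the half-cell on one side of the crease into a whole cell on the other. Only the vertex-surrounding cells can differ, and they are exactly what produces the dichotomy in the statement.

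Next I would show the global axes survive the refolding. Lemma~\ref{lemma:assemblytheory:sufficient1} gives that the etching inside each face is $n$-fold symmetric about the face-center, so every gf-axis of order $n$ remains a rotation axis of the tiled polyhedron, and that the union of cells crossed by an original edge is $2$-fold symmetric about the edge-center, so every ge-axis survives. For a gv-axis through an original vertex $v$ of valence $m$, the rotation by $2\pi/m$ about that axis cyclically permutes the $m$ incident faces; since all of them carry congruent etchings attached to one another by the edge-center identifications, this rotation carries the whole etched pattern onto itself, so every gv-axis of order $m$ survives too. Matching the number and the orders of all gf-, ge-, and gv-axes with those of the chosen regular solid shows that the tiled polyhedron has the full global polyhedral symmetry.

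To finish I would read off the remaining properties and split into cases. When the tile corners sit on lattice vertices (the triangular-lattice case and the vertex-aligned square-lattice case), the vertex-surrounding cells are still generic regular cells --- the angle defect at $v$ only lowers the valence of $v$ and leaves the surrounding cells unchanged --- so every face of the tiled polyhedron is a copy of one regular polygon, every edge is a copy of one lattice edge, each face and each edge carries its own local (lf-/le-) rotation axis inherited from the regularity of the lattice, and the vertices fall into at most two classes (the original vertices of valence $m$ and the interior lattice vertices of the generic valence). Together with the global polyhedral symmetry already established, this is precisely the definition of an \emph{almost-regular} polyhedron. When instead the tile corners sit on lattice face-centers (the hexagonal-lattice case and the face-centered square-lattice case), the defect at each original vertex forces the surrounding cell to be a regular polygon with strictly fewer sides, so the tiled polyhedron has at most two face types while its edges and vertices remain uniform and locally symmetric; this is exactly the structure of the dual of an \emph{almost-regular} polyhedron --- indeed it is the dual of the polyhedron produced by the corresponding vertex-aligned mapping.

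The step I expect to be the main obstacle is the analysis of the vertex-surrounding cells: one must show that the $m$ etched wedges meeting at an original vertex close up with no mismatch despite the angle defect, that the resulting cell is a genuine regular polygon --- of the same type in the vertex-aligned cases and of strictly fewer sides in the face-centered ones --- and that no third behaviour can occur. This reduces to a finite verification over the five regular solids and the three regular lattices, carried out with the coordinate formulas of Lemmas~\ref{lemma:assemblytheory:sufficient0}--\ref{lemma:assemblytheory:sufficient1}; everything else is bookkeeping.
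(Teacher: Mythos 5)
Your proposal is correct and follows essentially the same route as the paper's own proof: it invokes Lemma~\ref{lemma:assemblytheory:sufficient1} for the gf- and ge-axes, Lemma~\ref{lemma:assemblytheory:sufficient2} for the congruence of the per-face etchings and the matching of cells across original edges, and then derives the almost-regular versus dual dichotomy from whether the tile corners sit on lattice vertices or on lattice face-centers. The vertex-surrounding-cell closure that you flag as the main obstacle is exactly the step the paper also treats informally (asserting that the gv-axis is surrounded by $n$ congruent cells, or by a single regular $nc$-gon, without the finite case check), so your assessment of where the real work lies is accurate; the only detail you omit is the paper's noted exception that the tetrahedron mapped to the triangular lattice's face-centers yields an almost-regular polyhedron rather than a dual.
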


\begin{proof}
 The polyhedron generated by unfold-etch-refold method have local symmetry, due to the lattice being regular.
 
 Lemma \ref{lemma:assemblytheory:sufficient1} showed that within the face $\mathcal{T}$, it satisfies global symmetry around the gf-axes. Even after the faces are folded back into the complete polyhedron, the congruency of the all the face (Lemma \ref{lemma:assemblytheory:sufficient2}) guarantees that the remaining faces would also map to each other (including the etching inside them). 

 Lemma \ref{lemma:assemblytheory:sufficient2} showed that all faces (and edges) are congruent, and also that the etching is 2-fold symmetric around the edges of the face $\mathcal{T}$. Hence when folded back, the etchings from a neighboring face $\mathcal{T}'$ will match up perfectly (intersect the shared edge of $\mathcal{T}$ and $\mathcal{T}'$ at exactly the same points), and in case there are fractions of a lattice-face inside $\mathcal{T}$, exactly its complement will show up on the other side (inside $\mathcal{T}'$) of the shared edge, thereby all etched-faces are complete. Topologically, the etched-faces that cross the face boundaries and the ones that do not, are identical. 

 After folding back, the corners of the faces $\mathcal{T}, \mathcal{T}', \ldots$ meet at a point. Note that since all faces are congruent (Lemma \ref{lemma:assemblytheory:sufficient2}) and all corners of each face are also symmetric to each other (Lemma \ref{lemma:assemblytheory:sufficient1}), gv-symmetry is satisfied. Moreover, when $\mathcal{T}$ was mapped such that the corners fell on vertices of the lattice, then the gv-axis is surrounded by exactly $n$ congruent and regular etched-faces, where $n$ is the rotational symmetry of the gv-axis, and not the rotational symmetry around the lattice-vertices. For instance, mapping a tetrahedron onto a triangular lattice will produce 4 vertices where 3 edges are incident, and many more (depending on the scale of the mapping) where 6 edges are incident. Hence, two types of vertices appear on the polyhedron, the ones coinciding with the gv-axis and the ones that do not. On the other hand, if $\mathcal{T}$ was mapped such that the corners fell on face-centers of the lattice, then the gv-axis is surrounded by exactly a regular polyhedron with $nc$-fold symmetry, where $n$ is the rotational symmetry of the gv-axis and $c$ is an integer. Other etched faces will simply depend on the regularity of the lattice. For example, if an icosahedron is mapped to a hexagonal grid, then there would be 12 pentagonal faces and many hexagonal faces. 

 Hence, the polyhedron is \emph{almost-regular} if the corners of the faces fell on lattice vertices, or dual if the corners fell on face-centers. Tetrahedron mapped to triangular grid's face-centers is an exception, where the dual construction also produce \emph{almost-regular} polyhedron. 
\end{proof}

Figure \ref{fig:assemblytheory:primaldualconst} shows a few examples of constructing \emph{almost-regular} polyhedron and their duals by mapping a face of regular polyhedron on regular lattices in a compatible way. Note that the etched-faces that cross an edge of $\mathcal{T}$ are geometrically not identical to the ones that do not. The ones crossing the boundary have a crease inside them, or if they are flattened, they are no longer regular. This is addressed in the Section \ref{sec:assemblytheory:optimization}.

\begin{figure}[h!]
\centering
\includegraphics[width=0.85\linewidth]{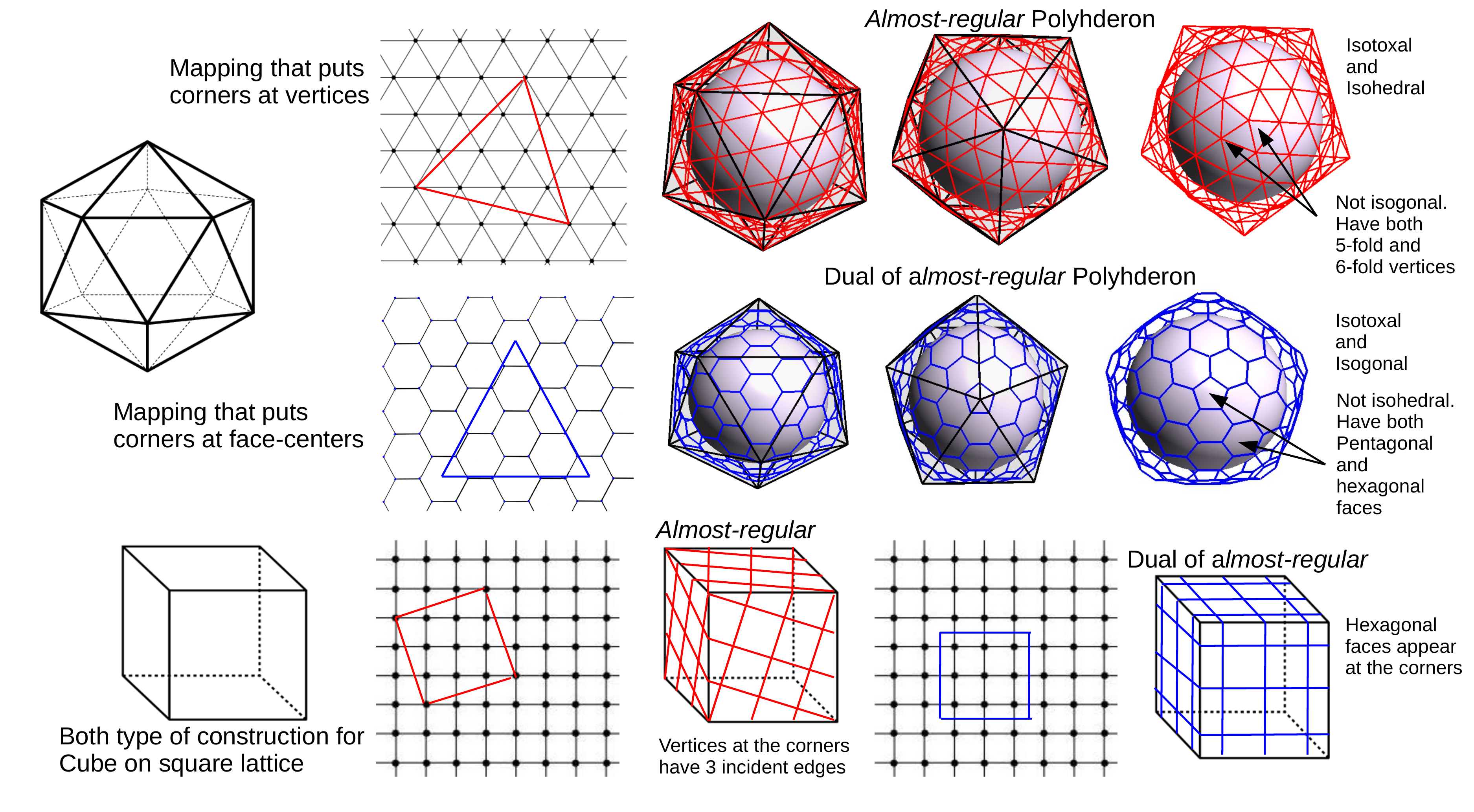}
\caption[Illustration of the constructing \emph{almost-regular} polyhedron and their duals]{ \textbf{Illustration of the constructing \emph{almost-regular} polyhedron and their duals}. Top row shows how placing corners of a polyhedral face on vertices of a compatible lattice produces an \emph{almost-regular} polyhedron. The black lines show the original polyhedron, and the red lines show the etching/tiling induced by the lattice. The second row shows and example of placing the corners at face centers and producing duals of \emph{almost-regular} polyhedron. Finally, the bottom row shows examples of both primal and dual construction using square lattice.}
\label{fig:assemblytheory:primaldualconst}
\end{figure}

Now we show that any deviation from a compatible mapping results in a violation of some global symmetry condition.

\begin{lemma}
\label{lemma:assemblytheory:necessary1}
 If a regular polyhedron is mapped to a regular lattice without fulfilling the compatible mapping conditions, then the resulting polyhedron cannot be \emph{almost-regular}.
\end{lemma}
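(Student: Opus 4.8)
The plan is to prove the contrapositive by a finite, exhaustive case analysis whose engine is the symmetry-order obstruction already isolated in Lemma~\ref{lemma:assemblytheory:localglobal}, together with two small tables: the rotation orders of the 2D regular lattices at their vertices, face-centers and edge-centers (triangular: $6$, $3$, $2$; square: $4$, $4$, $2$; hexagonal: $3$, $6$, $2$), and the rotation orders $n_f$ of the gf-axes of the regular polyhedra ($n_f=3$ for the tetrahedron, octahedron and icosahedron, $n_f=4$ for the cube, $n_f=5$ for the dodecahedron).

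First I would rerun the argument of Lemma~\ref{lemma:assemblytheory:localglobal} verbatim for the ge- and gv-axes as well, obtaining three conditions necessary for the etched polyhedron to be \emph{almost-regular}: (i) the center of each unfolded face $\mathcal{T}$ must land on a lattice point whose rotation order is a positive-integer multiple of $n_f$; (ii) the midpoint of every edge of $\mathcal{T}$ must land on a lattice point of \emph{even} rotation order, since the ge-order of every regular polyhedron is $2$; and (iii) every corner of $\mathcal{T}$ must land on a lattice vertex or a lattice face-center, for otherwise the gv-axis meets the etched polyhedron at a point that is neither a vertex nor a face-center of it, and the third bullet of Lemma~\ref{lemma:assemblytheory:localglobal}, transported to the gv-axis, rules out the required rotational symmetry there (equivalently, that vertex of the etched polyhedron carries no local symmetry). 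Since any non-compatible mapping violates at least one of (i)--(iii), the lemma reduces to the claim that (i)--(iii) hold \emph{only} for the compatible mappings of Lemma~\ref{lemma:assemblytheory:sufficient1}, together with the tetrahedral placement with corners at triangular-lattice face-centers noted after Lemma~\ref{lemma:assemblytheory:sufficient3}, which also yields an \emph{almost-regular} polyhedron and under which the statement must be read.

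For the case analysis I would first note that condition (i), compared against the tables, immediately kills the dodecahedron ($n_f=5$ divides no lattice rotation order), forces the triangular-faced solids onto the triangular or the hexagonal lattice and the cube onto the square lattice, and within each surviving pair pins the center of $\mathcal{T}$ to a lattice vertex or face-center. The crux is then a single observation: the rotation by $2\pi/n_f$ about the center of $\mathcal{T}$ -- a $120^{\circ}$ turn for a triangular face, a $90^{\circ}$ turn for the square face -- is, by (i), a symmetry of the lattice, and it cyclically permutes the corners of $\mathcal{T}$; hence all corners of $\mathcal{T}$ lie in a single lattice orbit, and with (iii) this leaves only ``all corners at lattice vertices'' or ``all corners at lattice face-centers''. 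Propagation along shared edges via Lemma~\ref{lemma:assemblytheory:sufficient2} extends either alternative to all faces, so it remains to test the two alternatives against (ii) by a short coordinate computation in the $\mathcal{L}$ coordinate system of Figure~\ref{fig:assemblytheory:coords} -- the arithmetic of the proof of Lemma~\ref{lemma:assemblytheory:sufficient1} run in reverse. This check returns exactly the compatible mappings: ``corners at vertices'' on the triangular or square lattice and ``corners at face-centers'' on the hexagonal or square lattice satisfy (ii), whereas ``corners at face-centers'' on the triangular lattice and ``corners at vertices'' on the hexagonal lattice push an edge-midpoint onto an odd-order lattice point and fail -- except for the tetrahedron, where the coincidence that its gv- and gf-axes both have order $3$ lets the triangular-face-center placement slip through and produce an \emph{almost-regular} polyhedron, matching the recorded exception.

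The main obstacle is not any individual deduction but certifying that the case split is exhaustive: one must be sure that every departure from a compatible mapping -- a lattice whose rotation orders do not match the face shape, the correct lattice but with corners at edge-midpoints or at generic (non-vertex, non-face-center) lattice points, corners placed where face-centers are required (or conversely) under the triangular/hexagonal duality, or a similarity that puts only some corners on lattice points -- is caught by (i), (ii) or (iii), and that the closing coordinate computation examines all of them. The most delicate sub-case is the hexagonal lattice, whose vertices are $3$-fold (fine for a gf-axis but not for a ge-axis) while its face-centers are $6$-fold, so that, unlike the square lattice, ``vertex'' and ``face-center'' are not interchangeable there; the tetrahedral case is the one place where the argument must explicitly admit an extra solution rather than forbid it.
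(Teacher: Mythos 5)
Your reorganization into three necessary conditions at the gf-, ge- and gv-axes, followed by an orbit argument (the rotation by $2\pi/n_f$ about the face center is a lattice symmetry once condition (i) holds, so all corners of $\mathcal{T}$ lie in one lattice orbit), is a genuinely different and in places cleaner decomposition than the paper's: it disposes of the ``some corners on, some corners off'' case automatically, where the paper argues separately that the etching cannot then be symmetric about the face center. However, there is a genuine gap at condition (iii). You justify ``every corner of $\mathcal{T}$ must land on a lattice vertex or face-center'' by transporting the third bullet of Lemma~\ref{lemma:assemblytheory:localglobal} to the gv-axis, i.e.\ by claiming that otherwise the gv-axis meets the etched polyhedron at a point that is neither a vertex nor a face-center of it. That transported obstruction does not cover all the cases you need it for. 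If a corner of $\mathcal{T}$ sits at a lattice \emph{edge-midpoint} (which the orbit argument permits: the $n_f$-fold rotation about the center can permute edge-midpoints among themselves), then after refolding the $n_v$ wedges around that corner, the gv-axis \emph{does} pass through a vertex of the etched polyhedron --- the point where the $n_v$ truncated half-edges meet --- and the refolded configuration is in fact $n_v$-fold rotationally symmetric there, since the wedges are congruent and cyclically permuted. So the symmetry-order obstruction you invoke is silent in exactly this sub-case; what actually fails there is the ge-condition (the midpoints of the edges of $\mathcal{T}$ leave the $2$-fold locus of Lemma~\ref{lemma:assemblytheory:sufficient0}) or the congruence of the etched faces, and you never run that check because your final coordinate computation of condition (ii) is applied only to the two alternatives surviving (iii).

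Concretely: the residual family of mappings with the face center on a lattice symmetry point but all corners off the vertex/face-center locus (at edge-midpoints or at generic points of a common orbit) is precisely where the paper does its real work, via the explicit rotation-and-scaling computation showing that any such placement pushes an edge-midpoint of $\mathcal{T}$ off the $2$-fold points of the lattice. Your proof either needs to extend the closing (ii)-computation to that whole residual family (which would in fact succeed, and would render (iii) unnecessary as a separate filter), or supply an independent argument for (iii) based on face congruence rather than on rotational order at the gv-axis. A secondary inaccuracy: the tetrahedron exception recorded after Lemma~\ref{lemma:assemblytheory:sufficient3} is not a case of the ge-check ``slipping through''; it is the statement that the \emph{dual} construction (corners at triangular-lattice face-centers) happens to yield a primal-type, almost-regular polyhedron because the tetrahedron is self-dual, so it should not be folded into your condition-(ii) bookkeeping.
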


\begin{proof}
 If some corners of a face $\mathcal{T}$ fall on lattice vertices (or face-centers) and other don't, then the etching inside the face cannot be symmetric around the center of the face. Hence the polyhedron is not \emph{almost-regular}.

 If none of the corners of $\mathcal{T}$ fall on lattice vertices (or face-centers), and the center of the face is not on a lattice vertex or center of a lattice face, then again the polyhedron is not \emph{almost-regular} (by Lemma \ref{lemma:assemblytheory:localglobal}). 

 Finally, we consider the case where none of the corners of $\mathcal{T}$ fall on lattice vertices (or face-centers), but the center $O$ of $\mathcal{T}$ is on a lattice vertex or a face-center. This would not violate the symmetry around the gf-symmetry axes. But we shall show that it would violate the symmetry around the ge-symmetry. Since, we are mapping a regular polyhedron onto a compatible lattice, there exists a mapping that would place a face $\mathcal{T}'$ such that the center falls at $O$, and the corners lie at integer coordinates. We can generate $\mathcal{T}$ by scaling and/or rotating $\mathcal{T}'$ around $O$. Now, we shall prove that any scaling or rotation of $\mathcal{T}$ that moves the vertices off lattice-vertices violate ge-symmetry. The proof depends on the geometry of coordinate system and here we shall only show if for the cube mapped to the square lattice case. Other cases follow the same pattern of proof.

 Without loss of generality, we assume that the center $O$ is the origin, and we focus on one edge $AB$ of the face $\mathcal{T}$ with integer coordinates $(x_1,y_1)$ and $(x_2,y_2)$. Hence, the midpoint $C(\frac{x_1+x_2}{2},\frac{y_1+y_2}{2})$ falls on a face-center, vertex, or edge-center. After rotating around $O$ by $\theta$ degrees, the new location of the center will be $C'(\frac{(x_1+x_2)cos\theta - (y_1+y_2)sin\theta}{2},\frac{(x_1+x_2)sin\theta + (y_1+y_2)cos\theta}{2})$. Hence, $C'$ can be on a 2-fold location if and only if the numerators are integers, which can only happen if $\theta$ is a multiple of $\pi/2$ and in that case, $\mathcal{T}'$ coincides with $\mathcal{T}$. 

 Now, if $\mathcal{T}$ was scaled by $s$ then the new position of the edge-center $C$ would be $C(\frac{(x_1+x_2)s}{2},\frac{(y_1+y_2)s}{2})$, and again the numerators are integers iff $s$ is an integer. And if $s$ is an integer, then the corners of $\mathcal{T}'$ would also be on lattice vertices or face-centers.

 Hence, scaling or rotating $\mathcal{T}$ around its center $O$ will not keep the $C$ on a face-center, vertex, or edge-center unless the corners of the transformed face $\mathcal{T}'$ fall on lattice-vertices. Finally, the fact that a regular lattice is not 2-fold symmetric around any point other than the face-centers, vertices, or edge-centers completes the proof.
\end{proof}

Finally, we conclude our characterization with the following theorem.

\begin{thm}
\label{thm:assemblytheory:characterization}
 The polyhedron generated by an unfold-etch-refold is \emph{almost-regular} if and only if a compatible mapping of a regular polyhedron onto an unfold-etch-refold compatible lattice is performed.
\end{thm}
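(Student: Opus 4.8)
The plan is to obtain the theorem by assembling the lemmas already proved, handling the two implications separately.

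For the direction ``compatible mapping $\Rightarrow$ almost-regular'', I would simply invoke Lemma~\ref{lemma:assemblytheory:sufficient3}. A compatible mapping in the sense of Lemma~\ref{lemma:assemblytheory:sufficient1} can only be set up on one of the three regular $2$D lattices, so the hypothesis that the lattice be unfold-etch-refold compatible is automatically part of the data. Lemma~\ref{lemma:assemblytheory:sufficient3} then states that the refolded polyhedron is \emph{almost-regular} when the corners of the mapped face land on lattice vertices, and is the dual of an \emph{almost-regular} polyhedron when they land on lattice face-centers (with the tetrahedron-on-triangular-lattice case producing an \emph{almost-regular} polyhedron either way). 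I would record the convention, implicit in the statement, that ``compatible mapping'' in the theorem refers to the vertex-on-vertex variant together with this tetrahedral exception, i.e.\ exactly the variant whose output is \emph{almost-regular} rather than a dual. The internal ingredients of Lemma~\ref{lemma:assemblytheory:sufficient3} --- regularity of the lattice giving local symmetry at every etched vertex, edge, and face; congruence of all etched faces via Lemma~\ref{lemma:assemblytheory:sufficient2}; the $n$-fold symmetry inside a face and the $2$-fold symmetry of the etching across each of its edges via Lemma~\ref{lemma:assemblytheory:sufficient1} and Lemma~\ref{lemma:assemblytheory:sufficient0} --- together certify face-, edge-, and global-symmetry, so nothing beyond citing these lemmas is required here.

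For the direction ``almost-regular $\Rightarrow$ compatible mapping'', I would chain the three necessity results. First, Lemma~\ref{lemma:assemblytheory:regulargrid} forces the lattice to be regular, hence square, triangular, or hexagonal. Second, Lemma~\ref{lemma:assemblytheory:localglobal} forces every gf-axis to pass through a lattice point of $cn$-fold symmetry, where $n$ is the face-center symmetry order of the regular polyhedron; matching the available orders (the tetrahedron, octahedron, and icosahedron have $n=3$ at face-centers, the cube has $n=4$, the dodecahedron has $n=5$) against the symmetry orders present in the three lattices ($4$ for the square lattice at vertices and face-centers, $6$ and $3$ for the triangular and hexagonal lattices) cuts the admissible (polyhedron, lattice) pairs down to precisely those enumerated in Lemma~\ref{lemma:assemblytheory:sufficient1} --- in particular the dodecahedron is excluded because no regular $2$D lattice admits $5$-fold symmetry. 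Third, within an admissible pair, Lemma~\ref{lemma:assemblytheory:necessary1} shows that any mapping which is \emph{not} compatible --- some corners of a face on lattice vertices or face-centers and others not, or no corner there with the face-center off every $2$-fold lattice point, or the face-center on such a point but the corners displaced by a nontrivial scaling or rotation --- violates the gf- or ge-symmetry condition. Hence the mapping must be a compatible one, on an admissible pair from the list of Lemma~\ref{lemma:assemblytheory:sufficient1}, which is the claim.

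The main obstacle is the middle step of this second direction: one must check that the symmetry-order bookkeeping of Lemma~\ref{lemma:assemblytheory:localglobal}, which constrains only the gf-axes, is genuinely enough --- in combination with Lemma~\ref{lemma:assemblytheory:necessary1} --- to pin the mapping down to the exact list of Lemma~\ref{lemma:assemblytheory:sufficient1}, including the ge-conditions at edge-centers, the gv-conditions at the refolded vertices, and the correct separation of the ``corners on lattice vertices'' case (\emph{almost-regular}) from the ``corners on face-centers'' case (dual, save for the tetrahedron). This is a finite verification over the five Platonic solids and the three regular lattices, but making the biconditional literally true requires stating precisely what ``compatible mapping'' is taken to encompass; I would therefore devote most of the write-up to spelling out that case analysis and to flagging the tetrahedral exception and the primal-versus-dual convention, since those are the only points where a bare citation of the lemmas could slip.
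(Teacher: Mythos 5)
Your proposal is correct and follows essentially the same route as the paper: the paper's own proof is a one-line citation of Lemmas~\ref{lemma:assemblytheory:regulargrid}, \ref{lemma:assemblytheory:localglobal}, \ref{lemma:assemblytheory:sufficient3} and \ref{lemma:assemblytheory:necessary1}, which is exactly the decomposition you use (sufficiency via Lemma~\ref{lemma:assemblytheory:sufficient3}, necessity via the other three). Your additional care in spelling out the primal-versus-dual convention and the tetrahedral exception goes beyond what the paper writes but does not change the argument.
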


\begin{proof}
The proof follows from Lemmas \ref{lemma:assemblytheory:regulargrid}, \ref{lemma:assemblytheory:localglobal}, \ref{lemma:assemblytheory:sufficient3} and \ref{lemma:assemblytheory:necessary1}.
\end{proof}

\subsection{Parametrization}
\label{sec:assemblytheory:combinatorics}
For the sake of simplicity of presentation, the following discussion is focused solely on mapping the icosahedron onto triangular lattices. Other compatible mappings can be discussed in the same manner with almost no difference in the theorems/lemmas presented here except for minor changes in counting. The choice to focus on the icosahedral case is primarily due to two reasons: first, it has the highest level of symmetry among the regular polyhedra which have a compatible mapping, and second, it has applications in modeling viruses, fullerenes etc. 

Let $\mathcal{L}$ be a lattice with origin $O$ and axes $H$ and $K$. Any point in the lattice is expressed using coordinates $(h,k)$ where both $h$ and $k$ are integers.

\begin{lemma}
 Assuming that one corner $A$ of the face $\mathcal{T}$ of the polyhedron is mapped to the origin $O$ of the lattice (or the nearest face-center for dual constructions). Then specifying the position of another compatibly placed point $B(h,k)$ is sufficient to parametrize the entire mapping. 
\end{lemma}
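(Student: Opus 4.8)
The plan is to account for the degrees of freedom of a planar similarity transformation and show that the two constraints ``$A\mapsto O$'' and ``$B\mapsto(h,k)$'' exhaust all of them (up to a discrete ambiguity that only produces mirror-image, hence congruent, outcomes), and then to invoke the propagation Lemmas~\ref{lemma:assemblytheory:sufficient1}--\ref{lemma:assemblytheory:sufficient3} to conclude that the placement of the single face $\mathcal{T}$ already determines the entire etched polyhedron.

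First I would observe that embedding the rigid equilateral triangle $\mathcal{T}=ABC$ into the lattice plane is a similarity transformation of $\R^2$, and such a transformation is determined by the images of two distinct points together with a choice of orientation: two real parameters for translation, one for the rotation angle, one for the scale factor. Sending $A$ to the origin $O$ fixes the translational part. Since $|AB|$ is a fixed edge length of the polyhedron while the direction $A\to B$ is prescribed to be the lattice vector $h e_1 + k e_2$, the images of $A$ and $B$ simultaneously pin down the rotation angle and the scale $s=|h e_1 + k e_2|/|AB|$ (a lattice vector carries both a direction and a length). The only freedom left is the sign of the orientation, i.e.\ whether $C$ lies to the left or the right of the directed segment $A\to B$; fixing this convention --- as is done implicitly in Lemma~\ref{lemma:assemblytheory:sufficient1}, where the third corner is taken to be $(h+k,-h)$ rather than its mirror image --- removes this last freedom, and the two conventions yield mirror-image almost-regular polyhedra, which are congruent. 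Hence the placement of $\mathcal{T}$ on the lattice is completely determined by the integer pair $(h,k)$, and since $B=(h,k)$ is compatibly placed, Lemma~\ref{lemma:assemblytheory:sufficient1}(1) certifies that $\mathcal{T}$ itself is compatibly mapped.

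Next I would apply Lemma~\ref{lemma:assemblytheory:sufficient2}: once one face is compatibly mapped, rotating by $\pi$ about the midpoint of each shared edge propagates a compatible mapping to every other face of whatever unfolding was used, and the etched pattern inside each face is congruent to that inside $\mathcal{T}$. Therefore the etched unfolding, and after refolding the etched polyhedron, is uniquely reconstructed from the etching of $\mathcal{T}$; and the etching of $\mathcal{T}$ is merely the restriction of the fixed triangular lattice to the triangle with corners $(0,0)$, $(h,k)$, $(h+k,-h)$, so it depends only on $(h,k)$. Finally, by Lemma~\ref{lemma:assemblytheory:sufficient3} the refolded object does not depend on which unfolding (spanning tree of the face-adjacency graph) was chosen, because the etched-faces that cross a face boundary are topologically identical to those that do not. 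Consequently the full mapping is parametrized by the single point $B=(h,k)$.

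The step I expect to be the main obstacle is the bookkeeping of these degrees of freedom and discrete ambiguities --- making precise that specifying $B$ removes the rotation and the scale at the same time, and arguing cleanly that the residual chirality, the choice of which of the two corners other than $A$ is called $B$, and the choice of unfolding either are fixed by convention or yield the same polyhedron up to congruence. Everything geometric beyond this bookkeeping is already supplied by Lemmas~\ref{lemma:assemblytheory:sufficient1}, \ref{lemma:assemblytheory:sufficient2} and \ref{lemma:assemblytheory:sufficient3}.
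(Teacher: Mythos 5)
Your proposal is correct and follows essentially the same route as the paper's own (much terser) proof: the paper likewise observes that once $A$ and $B$ are fixed there remain exactly two ways to place the rest of the regular face $\mathcal{T}$, that both are compatible by Lemma~\ref{lemma:assemblytheory:sufficient1} and congruent by Lemma~\ref{lemma:assemblytheory:sufficient2}, and that either may be chosen arbitrarily. Your degree-of-freedom bookkeeping and the explicit appeal to Lemma~\ref{lemma:assemblytheory:sufficient3} for the independence from the chosen unfolding simply spell out details the paper leaves implicit.
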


\begin{proof}
 Since $\mathcal{T}$ is regular, there are exactly 2 possible ways to define the other points of $\mathcal{T}$. Lemma \ref{lemma:assemblytheory:sufficient1} showed that both of these choices will result in a compatible mapping as long as $A$ and $B$ are also compatibly mapped. Also by Lemma \ref{lemma:assemblytheory:sufficient2}, these two are congruent. So, any one of them can be chosen arbitrarily.
\end{proof}

We mention the following lemma whose proof is immediate.

\begin{lemma}
 Topology of any \emph{almost-regular} polyhedron or its dual can be parametrized using a tuple $<\mathcal{P},\mathcal{L},h,k>$, where $\mathcal{P}$ is a regular polyhedron, $\mathcal{L}$ is a lattice represented using two axes, and $h$ and $k$ are integers.
\end{lemma}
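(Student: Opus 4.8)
The plan is to assemble this directly from the characterization theorem and the parametrization lemma that immediately precedes this statement, so the argument is a matter of bookkeeping rather than new content. First I would recall that, by Theorem~\ref{thm:assemblytheory:characterization}, every \emph{almost-regular} polyhedron and every dual of one is (up to combinatorial equivalence) obtained by an unfold-etch-refold construction; hence its topology is entirely determined by the three ingredients that construction consumes: the regular polyhedron $\mathcal{P}$ that is unfolded, the regular lattice $\mathcal{L}$ onto which it is unfolded --- which by Lemma~\ref{lemma:assemblytheory:regulargrid} must be one of the three 2D regular lattices, each described by its pair of axes $e_1,e_2$ --- and the compatible mapping of the faces of $\mathcal{P}$ onto $\mathcal{L}$.

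Second, I would eliminate the unfolding itself as a parameter. Lemma~\ref{lemma:assemblytheory:sufficient2} shows that once any single face $\mathcal{T}$ is compatibly mapped, every other face is compatibly mapped and all the in-face etchings are congruent, and Lemma~\ref{lemma:assemblytheory:sufficient3} shows the refolded object is almost-regular (or a dual) regardless of which of the polyhedron's many unfoldings was used. Consequently the resulting topology depends only on how one face sits on the lattice, not on the choice of unfolding.

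Third, I would pin down that single-face placement with two integers. Fixing one corner $A$ of $\mathcal{T}$ at the origin $O$ (or at the nearest lattice face-center, for dual constructions), the preceding lemma says the image $B=(h,k)$ of a second, compatibly placed corner --- so $h,k\in\Z$ --- already determines the whole mapping, since the only remaining freedom is the binary choice of orientation of $\mathcal{T}$, and Lemmas~\ref{lemma:assemblytheory:sufficient1} and~\ref{lemma:assemblytheory:sufficient2} guarantee both choices are compatible and yield congruent, hence topologically identical, etchings. Putting these together, the tuple $\langle\mathcal{P},\mathcal{L},h,k\rangle$ records exactly the data the construction depends on, so it parametrizes the topology; conversely, every such tuple with $(h,k)$ a compatible point produces an almost-regular polyhedron or a dual by Lemma~\ref{lemma:assemblytheory:sufficient3}.

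Since each step is a direct citation of an already-established lemma, there is no genuine obstacle here --- which is why the authors call the proof immediate. The only point deserving a sentence of care is being explicit that it is the \emph{topology} that is coordinatized: two integer pairs related by a symmetry of $\mathcal{L}$ (or by the global symmetry group of $\mathcal{P}$) can give combinatorially isomorphic polyhedra, so the tuple is a surjective coordinatization rather than a bijective one, which is all the statement asserts.
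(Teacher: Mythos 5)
Your proposal is correct and takes essentially the approach the paper intends: the authors declare the proof ``immediate,'' and what you have written is exactly the chain of citations (the characterization theorem, Lemmas \ref{lemma:assemblytheory:sufficient1}--\ref{lemma:assemblytheory:sufficient3}, and the single-face parametrization lemma) that makes it so, together with the correct observation that the tuple is a surjective rather than bijective coordinatization. The only soft spot---that Theorem \ref{thm:assemblytheory:characterization} as literally stated characterizes which \emph{unfold-etch-refold} outputs are almost-regular rather than proving every almost-regular polyhedron arises that way---is inherited from the paper itself, not introduced by your argument.
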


\subsubsection{Combinatorial Results}
We consider the case where $\mathcal{P}$ is an icosahedron whose symmetry group will is $I$, and $\mathcal{L}$ is the triangular lattice which will be denoted as $\mathcal{L}^3$. Now, we discuss some properties of the lattice.

\begin{definition}
We define each triangle of the lattice $\mathcal{L}^3$ as a small triangle\index{Small Triangle} and use $t$ to denote such a triangle. Let us define a triple $<i,j,k>$ where $i$ and $j$ are integers and $k \in \{+,-\}$. Let the triangle produced by the intersections of the lines $h=i$, $k=j$ and $h+k=i+j+1$ (having the vertices $(i,j), (i+1,j)$ and $(i, j+1)$) be denoted $t_{ij+}$. Similarly, the triangle denoted $t_{ij-}$ has vertices $(i,j), (i+1,j-1)$ and $(i+1, j)$, and is produced by the intersections of the lines $h=i+1$, $k=j$ and $h+k=i+j$. 
\label{def:smallt}
\end{definition}

The proof of the following lemma is immediate from this definition.

\begin{lemma}
$t_{i_1j_1k_1}$ coincide with $t_{i_2j_2k_2}$ if and only if $i_1 = i_2$, $j_1 = j_2$ and $k_1 = k_2$. For any small triangle in $\mathcal{L}^3$, there exists a triple $<i,j,k>$ such that $t_{ijk}$ represents that small triangle.
\end{lemma}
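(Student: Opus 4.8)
The plan is to treat the triple $\langle i,j,k\rangle$ as a label for a $2$-cell of a line arrangement and to verify injectivity (the ``only if'' direction; the ``if'' direction is immediate from Definition~\ref{def:smallt}) and surjectivity onto all small triangles separately. Throughout I would read ``$t_{i_1j_1k_1}$ coincides with $t_{i_2j_2k_2}$'' as equality of the two triangles as subsets of the plane, equivalently equality of their three-element vertex sets.

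For injectivity I would first separate the two orientations using centroids: a one-line computation gives the centroid of $t_{ij+}$ as $(i+\tfrac13,\,j+\tfrac13)$ and that of $t_{ij-}$ as $(i+\tfrac23,\,j-\tfrac13)$. Since a triangle determines its centroid, if $t_{i_1j_1k_1}=t_{i_2j_2k_2}$ the centroids agree; but the first centroid coordinate is $\equiv\tfrac13\pmod 1$ exactly when the orientation bit is $+$ and $\equiv\tfrac23\pmod 1$ when it is $-$, so $k_1=k_2$. With the orientation fixed, the integer parts of the centroid coordinates recover $(i,j)$ uniquely (for $+$, $i=\lfloor x\rfloor$ and $j=\lfloor y\rfloor$; for $-$, $i=\lfloor x\rfloor$ and $j=\lfloor y\rfloor+1$), giving $i_1=i_2$ and $j_1=j_2$. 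An alternative is to note that the three edge-lines of the triangle, together with the side of each on which the triangle lies, already pin down $i$, $j$ and the sign; the centroid version is just shorter.

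For surjectivity I would recall that in the $\mathcal{L}$-coordinate system the triangular lattice $\mathcal{L}^3$ is the arrangement of the three families of parallel lines $\{h=m\}$, $\{k=m\}$, $\{h+k=m\}$ over $m\in\Z$, whose $2$-cells are exactly the small triangles. Given such a cell $t$: being connected and disjoint from every line $h=m$ and every line $k=m$, it lies in a unique open unit rhombus $R_{ij}=\{(h,k): i<h<i+1,\ j<k<j+1\}$ with $i,j\in\Z$; on $R_{ij}$ one has $i+j<h+k<i+j+2$, so the only line of the third family meeting $R_{ij}$ is $h+k=i+j+1$, and it cuts $R_{ij}$ into the triangle with vertices $(i,j),(i+1,j),(i,j+1)$, namely $t_{ij+}$, and the triangle with vertices $(i,j+1),(i+1,j),(i+1,j+1)$, which is $t_{i(j+1)-}$. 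Hence $t$ equals one of these, and as $(i,j)$ runs over $\Z^2$ the labels $t_{ij+}$ and $t_{i(j+1)-}$ exhaust $\{t_{ij+}\}\cup\{t_{ij-}\}$, so every small triangle is represented. I expect no genuine obstacle here; the only care needed is the asymmetry in Definition~\ref{def:smallt} (the $j-1$ in one vertex of $t_{ij-}$) and the clash of notation (the symbol $k$ used both as a lattice coordinate and as the orientation sign), which must be tracked carefully when matching $t_{i(j+1)-}$ to the lower half of $R_{ij}$.
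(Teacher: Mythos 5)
Your proof is correct; the paper itself gives no argument (it declares the lemma ``immediate from this definition''), and your centroid computation for injectivity and line-arrangement cell count for surjectivity are exactly the routine verifications being elided. Both the matching of the upper half of $R_{ij}$ to $t_{i(j+1)-}$ and the recovery of $(i,j,k)$ from the fractional and integer parts of the centroid check out against Definition~\ref{def:smallt}.
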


Through etching, the triangular lattice $\mathcal{L}^3$ produces a tiling of a face $\mathcal{T}$ (which will be called a large triangle\index{Large Triangle} in this section) of $\mathcal{P}$ where each tile is a small triangle. Now we consider some properties of this tiling.

Assuming $A$ is at $(0,0)$, $B$ is $(h,k)$ such that $h$ and $k$ are integers, the tiling produced by $\mathcal{L}^3$ on $\mathcal{T}$ satisfies:
   \begin{itemize}
      \item The area of $\mathcal{T}$ is $\frac{\sqrt{3}}{4} (h^2+hk+k^2)$, which is equal to the area of $h^2+hk+k^2$ small triangles.
      \item In addition to $A, B$ and $C$, $\mathcal{T}$ includes exactly $\frac{h^2+hk+k^2-1}{2}$ more vertices of $\mathcal{L}^3$. Note that any vertex that lie on an edge of $\mathcal{T}$ is counted as half a vertex.
      \item Each edge of $\mathcal{T}$ is intersected by at most $2(h + k) - 3$ lines of the form $h=c$, $k=d$ and $h+k=e$, where $c,d$ and $e$ are integers.
      \item The number of small triangles intersected by any edge of $\mathcal{T}$ is at most $2(h+k-1)$.
   \end{itemize}

The following are some combinatorial properties of the overall tiled polyhedron:
   \begin{itemize}
      \item There are exactly $20(h^2+hk+k^2)$ small triangles, and the same number of local 3-fold axes.
      \item The 12 gf-symmetry axes are surrounded by 5 small triangles.
      \item There are exactly $10(h^2+hk+k^2-1)$ vertices (not lying on the gf-axes) with 6-fold local symmetry.
   \end{itemize}

Similar properties can easily be derived for other mappings as well. The important point to note is that not only the topology, but also the symmetry and combinatorics are also parameterized by only $h$ and $k$.

\section{Constructing all \emph{Almost-regular} Polyhedra}
In the previous sections we characterized the conditions that must be satisfied by a \emph{unfold-etch-refold} protocol to produce an \emph{almost-regular} polyhedron. The characterization immediately lends itself to efficient generation of families of such polyhedra whose topology can be parameterized using just two variables (discussed below). Furthermore, the symmetry at global and local levels lets us represent the geometry using a minimal set of points. Finally, we show how these properties lead to efficient optimization algorithms for constructing 3D shapes with spherical symmetries.

Note that given a point $P$ with coordinate $(i,j)$ inside $\mathcal{T}$, there exists two other points $Q$ and $R$ such that $P$, $Q$ and $R$ are 3-fold symmetric around the center $D$ of $\mathcal{T}$. The two points $Q$ and $R$ have coordinates $(h-i-j, k+i)$ and $(h+k+j,-h-i)$ respectively. This can be seen by noticing that stepping along the $H$ and $K$ axis by $i$ and $j$ units from $A(0,0)$ is $C^3$ symmetric (around $D$) to stepping in $-H+K$ and $-H$ directions by the same units from $B(h,k)$, and stepping in $-K$ and $H-K$ directions by the same units from $C$. We can further extend it to triangles and deduce the following.

\begin{lemma}
If $A(h_1, k_1)$, $B(h_2, k_2)$ and $C(h_3, k_3)$ are three points in the $HK$ coordinate system such that $h_1, h_2, h_3, k_1, k_2, k_3$ are integers and $ABC$ is an equilateral triangle whose centroid is $O$, then the small triangles $t_{h_1+i,k_1+j,\pm}$, $t_{h_2-i-j-1,k_2+i,\pm}$ and $t_{h_3+j,k_3-i-j-1,\pm}$ are $C^3$ symmetric around $O$.
\label{lem:trisym2}
\end{lemma}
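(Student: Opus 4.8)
The plan is to produce the rotation explicitly and then verify everything by substitution into coordinate formulas. Let $\rho_O$ be the rotation of the plane by $120^\circ$ about $O$ that sends $A$ to $B$; since $O$ is the common circumcenter of the equilateral triangle $ABC$, such a $\rho_O$ is unique, and because it is a $120^\circ$ rotation about that circumcenter it also sends $B\mapsto C$ and $C\mapsto A$. Let $\rho$ be its linear part, the $120^\circ$ rotation fixing the origin; in the $HK$-coordinates this is the integer-matrix map $\rho(e_1)=-e_1+e_2$, $\rho(e_2)=-e_1$ recorded in the remark just before the lemma (it is exactly the map taking a step of $i$ along $H$ and $j$ along $K$ from $A$ to a step of $i$ along $-H+K$ and $j$ along $-H$ from $B$). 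Then $\rho_O(x)=\rho(x)+(I-\rho)(O)$.

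The key step is that $\rho_O$ is a symmetry of the lattice $\mathcal{L}^3$, even though $O$ itself need not be a lattice vertex. Indeed $(I-\rho)(O)=\rho_O(A)-\rho(A)=B-\rho(A)$, and since $A$ is a lattice vertex and $\rho$ has an integer matrix, both $B$ and $\rho(A)$ are lattice vertices, so their difference is a lattice translation. Hence $\rho_O$ is a lattice automorphism $\rho$ followed by a lattice translation, so it preserves $\mathcal{L}^3$; in particular it cyclically permutes the three families of lattice lines $\{h=c\}$, $\{k=d\}$, $\{h+k=e\}$ and therefore permutes the small triangles of Definition \ref{def:smallt}.

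What remains is pure bookkeeping. From $\rho_O(x)=\rho(x)+(B-\rho(A))$ one gets that the point with coordinates $(h_1+a,\,k_1+b)$, i.e.\ $A$ displaced by $ae_1+be_2$, is sent to the point with coordinates $(h_2-a-b,\,k_2+a)$. Applying this to the three vertices $(h_1+i,k_1+j)$, $(h_1+i+1,k_1+j)$, $(h_1+i,k_1+j+1)$ of $t_{h_1+i,\,k_1+j,\,+}$ produces the points $(h_2-i-j,\,k_2+i)$, $(h_2-i-j-1,\,k_2+i+1)$, $(h_2-i-j-1,\,k_2+i)$, which are precisely the three vertices of $t_{h_2-i-j-1,\,k_2+i,\,+}$. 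Since $\rho_O$ is affine and carries small triangles to small triangles, $\rho_O(t_{h_1+i,k_1+j,+})=t_{h_2-i-j-1,k_2+i,+}$. One more application of $\rho_O$, now using $\rho_O(B+v)=C+\rho(v)$, i.e.\ $(h_2+a,\,k_2+b)\mapsto(h_3-a-b,\,k_3+a)$, sends this to $t_{h_3+j,\,k_3-i-j-1,\,+}$ by the identical vertex check, and since $\rho_O^3=\mathrm{id}$ the third triangle returns to the first; so the three form a single $C^3$-orbit about $O$. The downward ($-$) small triangles are treated in exactly the same way, tracking the vertex triple $(p,q),(p+1,q-1),(p+1,q)$ in place of $(p,q),(p+1,q),(p,q+1)$.

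I expect the only genuine obstacle to be the middle step: recognizing that the rotation about the centroid is a lattice symmetry, the crux being that $(I-\rho)(O)$ is forced into $\mathcal{L}^3$ precisely because $ABC$ is equilateral with integer vertices. Everything after that is substitution into the coordinate rules for $\rho_O$ and for $t_{ijk}$. One minor point to fix at the outset is orientation: the index formulas in the statement correspond to $A\mapsto B\mapsto C$ being traversed in the same rotational sense as in the preceding remark, equivalently to $\rho$ rather than $\rho^{-1}$; with the opposite labelling one gets the mirror statement with the two off-diagonal displacement rules interchanged.
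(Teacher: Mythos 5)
Your proof is correct, and it is the rigorous version of the argument the paper only gestures at: the paper justifies the lemma with the informal ``stepping directions'' remark preceding it and the phrase ``we can further extend it to triangles,'' without actually proving anything about small triangles. What you add beyond the paper is exactly the two things that need adding: (i) the observation that the $120^\circ$ rotation $\rho_O$ about the centroid $O$ --- which is in general \emph{not} a lattice vertex --- is nevertheless a lattice automorphism, because its translational part $(I-\rho)(O)=B-\rho(A)$ is forced to be a lattice vector by the integrality of $A$, $B$ and of the matrix of $\rho$; and (ii) the vertex-by-vertex bookkeeping showing that the image of $t_{h_1+i,k_1+j,+}$ is $t_{h_2-i-j-1,k_2+i,+}$ and then $t_{h_3+j,k_3-i-j-1,+}$ (note that in the second step the three vertices match as a set even though the ``base'' vertex does not map to the ``base'' vertex, which your phrasing correctly accommodates). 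Your orientation caveat at the end is also apt, since the stated index formulas do presuppose that $A\mapsto B\mapsto C$ agrees with the rotational sense of $\rho$. As a side effect, your computation shows that the third point in the paper's preceding remark should read $(h+k+j,-h-i-j)$ rather than $(h+k+j,-h-i)$; the lemma's own triangle indices are consistent with the correct rotation, so this is a typo in the remark, not an error in the statement you proved.
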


Now, we define the minimal set of points or non-redundant set of points $\mathbb{S}$ such that no two points $s_i,s_j\in S$ are $C^3$ symmetric to each other around $D$, and all points in $\mathbb{S}$ lie inside or on $\mathcal{T}$. Clearly, $|\mathbb{S}| = \lceil\frac{h^2+hk+k^2}{3}\rceil$. Note that applying $C^3$ operations on $\mathbb{S}$ produces all points inside and on $\mathcal{T}$.

Now we are ready to specify a concrete algorithm for computing \emph{almost-regular} polyhedra (see Figure \ref{fig:ConstructAlmostRegularPolyhedron}).

\begin{figure}[h!] 
\centering
\begin{minipage}{\codewidth}
\begin{center}
 	\framebox{
		\begin{minipage}{\codewidth}
	 		{\scriptsize
	
	 		\noindent\func{TilingGen$(\mathcal{P},\mathcal{L},h,k)$}

			Constructs an \emph{almost-regular} polyhedron using compatible mapping of polyhedron $\mathcal{P}$ onto lattice $\mathcal{L}$, such that the scaling and combinatorics are specified by $h$,$k$

	 		\noindent
	 		\begin{enumerate}
				\item Assume that the lattice coordinate system is aligned with the Cartesian coordinate system such that the origins coincide and one of the axes is aligned to the X axis, and the other lies on the XY plane.
	 			\item Place one point $A$ at the origin $(0,0)$ of the lattice, a second point at $(h,k)$. Compute the other corners of the face $\mathcal{T}$. Note that we only need to know the number of vertices $n$ of $\mathcal{T}$.
				\item Compute the location of the center $D$ of the face $\mathcal{T}$.
				\item Let $\mathbb{T_C}$ be the set of cyclic symmetry operations around $D$, such that $|\mathbb{T_C}| = n$.
				\item Initialize empty set $\mathbb{S}$
				\item For each lattice point $p$ inside or on $\mathcal{T}$ do
				\item ~~~~ Add $p$ to $\mathbb{S}$ if and only none of the transformations in $\mathbb{T_C}$ applied to $p$ produces a point which is already in $\mathbb{S}$.
	 			\item Compute the transformation $T_{map}$ which maps the face $\mathcal{T}$ to a face of the polyhedron $\mathcal{P}$. $T_{map}$ is composed of $T_{map_T}T_{map_S}T_{map_A}$ such that $T_{map_A}$ translates $\mathcal{T}$ along the lattice to take $D$ to the origin $O$, $T_{map_S}$ is a scaling that resizes $\mathcal{T}$ to the size of the faces in $\mathcal{P}$, then $T_{map_T}$ is a translation along Z-axis by an amount equal to the distance from the center of $\mathcal{P}$ to a face-center.
	 			\item Let $\mathbb{T_P}$ be the set of global symmetry operations (from the symmetry group of $\mathcal{P}$.
			        \item Define a set of transformations $\mathbb{T}_{all} = \{T_2T_{map}T_1|T_2 \in \mathbb{T_P} \& T_1 \in \mathbb{T_S}\}$.
				\item All points on the almost-regular polyhedron is now generated by simply computing $\mathbb{T}_{all}(\mathbb{S})$.
			\end{enumerate}
	 		}%end scriptsize
 
 		\end{minipage}
 	}% end framebox
\end{center}
\end{minipage}
\vspace{-0.2cm}
\caption[\func{TilingGen}: Algorithm for constructing an \emph{almost-regular} polyhedron using compatible mapping]{\func{TilingGen}: Algorithm for constructing an \emph{almost-regular} polyhedron using compatible mapping}
\label{fig:ConstructAlmostRegularPolyhedron}
\vspace{-0.3cm}
\end{figure}

\begin{thm}
The algorithm \func{TilingGen} constructs a minimal geometric representation of the \emph{almost-regular} polyhedron in terms a set of points $\mathbb{S}$ embedded onto the XY plane and a set of 3D transformations $\mathbb{T}_{all}$.
\end{thm}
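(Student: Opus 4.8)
The statement bundles two claims that I would prove separately. The \emph{correctness/completeness} claim is that $\mathbb{T}_{all}(\mathbb{S})$ is exactly the vertex set of the \emph{almost-regular} polyhedron determined by the tuple $\langle\mathcal{P},\mathcal{L},h,k\rangle$; the \emph{minimality} claim is that, among representations of the form ``orbit of a planar seed set under a finite set of $3$D transformations'', no representation uses fewer seed points or fewer transformations. The plan is to obtain the first by chasing the eleven construction steps of \func{TilingGen} while invoking the compatibility lemmas of Section~\ref{sec:assemblytheory:enumeration}, and to obtain the second by an orbit-counting / fundamental-domain argument built on Lemma~\ref{lem:trisym2}.

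\emph{Correctness.} First, since $A$ is placed at $(0,0)$ and $B$ at $(h,k)$ with $h,k$ integers, Lemmas~\ref{lemma:assemblytheory:sufficient1}--\ref{lemma:assemblytheory:sufficient2} guarantee that the induced etching of the face $\mathcal{T}$ is a \emph{compatible mapping} and that every refolded face is congruent to $\mathcal{T}$ with etchings that match across shared edges; by Lemma~\ref{lemma:assemblytheory:sufficient3} the refolded object is \emph{almost-regular} (or its dual), so its vertex set is well defined. Next, by construction $T_{map}=T_{map_T}T_{map_S}T_{map_A}$ carries the planar etched copy of $\mathcal{T}$ (with centroid $D$) onto the face of $\mathcal{P}$ of matching size, centered on its gf-axis at the correct distance from the origin of $\R^3$, so $T_{map}(\mathcal{T}\cap\mathcal{L})$ is precisely the set of polyhedron vertices lying on that one face. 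Then, since $\mathbb{S}$ is by construction a set of representatives for the cyclic $C^n$-action about $D$ (here $n$ is the number of sides of $\mathcal{T}$, so $n=3$ in the icosahedral case), Lemma~\ref{lem:trisym2} and the preceding discussion give that the cyclic orbit of $\mathbb{S}$ equals $\mathcal{T}\cap\mathcal{L}$. Finally, because the polyhedron is face-transitive and $\mathbb{T_P}$ is a transversal of its global rotation group acting transitively on faces, applying $\mathbb{T_P}$ to the vertices of a single face yields every vertex and only vertices. Composing the three stages in the order dictated by $\mathbb{T}_{all}=\{T_2T_{map}T_1\}$ gives $\mathbb{T}_{all}(\mathbb{S})=\mathbb{T_P}\bigl(T_{map}(\mathbb{T_C}(\mathbb{S}))\bigr)=\mathbb{T_P}\bigl(T_{map}(\mathcal{T}\cap\mathcal{L})\bigr)$, which is exactly the vertex set, with no extraneous points since every element of the right-hand side is such a vertex.

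\emph{Minimality.} I would first record the count: the number of lattice points in or on $\mathcal{T}$, with the paper's boundary convention (edge points with weight $\tfrac12$, corner points among $n$ faces with weight $\tfrac1n$), is $h^2+hk+k^2$, equivalently the number of small triangles, via the area formula $\tfrac{\sqrt3}{4}(h^2+hk+k^2)$. The rotation $C^n$ about $D$ has all orbits of size $n$ except the center when $D$ itself is a lattice vertex, so any planar seed set whose cyclic orbit is $\mathcal{T}\cap\mathcal{L}$ must contain at least $\lceil (h^2+hk+k^2)/n\rceil=|\mathbb{S}|$ points; conversely, deleting any point of $\mathbb{S}$ removes a whole $C^n$-orbit, hence (since $\mathbb{T_P}$ preserves orbit structure) a whole orbit of polyhedron vertices not recovered by the rest. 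Thus $\mathbb{S}$ is of smallest possible size and irredundant, i.e.\ a fundamental domain for the action of the polyhedron's symmetry group restricted to one face. On the transformation side, $|\mathbb{T}_{all}|\le|\mathbb{T_P}|\cdot|\mathbb{T_C}|$, and modulo the stabilizers at the finitely many special symmetry locations this equals the order of the polyhedron's rotation-symmetry group, which is the minimum number of isometries needed to saturate a fundamental domain; so $(\mathbb{S},\mathbb{T}_{all})$ is minimal in the precise sense that $\mathbb{S}$ is a fundamental domain and $\mathbb{T}_{all}$ a generating transversal for the symmetry action on the vertex set.

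\emph{Main obstacle.} The delicate half is minimality, and it is partly definitional: one must pin down what a ``geometric representation'' is allowed to be in order to exclude ad hoc encodings, and the honest, provable statement is minimality within the class of ``symmetry-orbit'' representations, which is exactly what \func{TilingGen} produces. The accompanying technical nuisance is the boundary bookkeeping on $\partial\mathcal{T}$: lattice points on the edges of $\mathcal{T}$ belong to two refolded faces and corner points to $n$ of them, so I must verify that $\mathbb{T}_{all}(\mathbb{S})$ produces each shared vertex of $\R^3$ exactly once; this is precisely where the $2$-fold symmetry of the etching about each edge-center (part of Lemma~\ref{lemma:assemblytheory:sufficient1}) together with the congruence of neighboring faces (Lemma~\ref{lemma:assemblytheory:sufficient2}) must be used. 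A final minor point worth flagging is that $T_{map}$ carries the global scaling $T_{map_S}$, so ``geometric representation'' must permit a similarity transformation rather than a pure isometry; this is harmless but should be stated explicitly.
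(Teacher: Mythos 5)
Your proposal is correct and takes essentially the same route as the paper, whose entire proof is the one-line citation ``Follows from the definition of $\mathbb{S}$ and Lemma~\ref{lemma:assemblytheory:sufficient3}'': correctness is meant to come from the compatible-mapping lemmas culminating in Lemma~\ref{lemma:assemblytheory:sufficient3}, and minimality from the definition of $\mathbb{S}$ as a non-redundant set of $C^3$-orbit representatives with $|\mathbb{S}|=\lceil(h^2+hk+k^2)/3\rceil$. Your elaboration --- the explicit orbit-counting argument for minimality, the boundary multiplicity check on $\partial\mathcal{T}$, and the observation that ``minimal'' must be read relative to symmetry-orbit representations --- is exactly the content the paper's one-liner gestures at but does not supply, and your flagged caveats are legitimate gaps in the paper's own statement rather than in your argument.
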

\begin{proof}
 Follows from the definition of $\mathbb{S}$ and Lemma \ref{lemma:assemblytheory:sufficient3}.
\end{proof}

Some polyhedron generated by applying \func{TilingGen} are shown in Figure \ref{fig:TilingGenResults}.

Note that if the tiles that cross the boundaries of a face $\mathcal{T}$ of $\mathcal{P}$ are not regular, they would look like they have a crease along the edge of the $\mathcal{T}$ by definition of the unfold-etch-refold technique. Tiles generated by this algorithm will also have the same problem and such tiles will be non-regular, and in some cases even non-planar. In the next section we address this issue.

\begin{figure}[h!]
\centering
\includegraphics[width=0.85\linewidth]{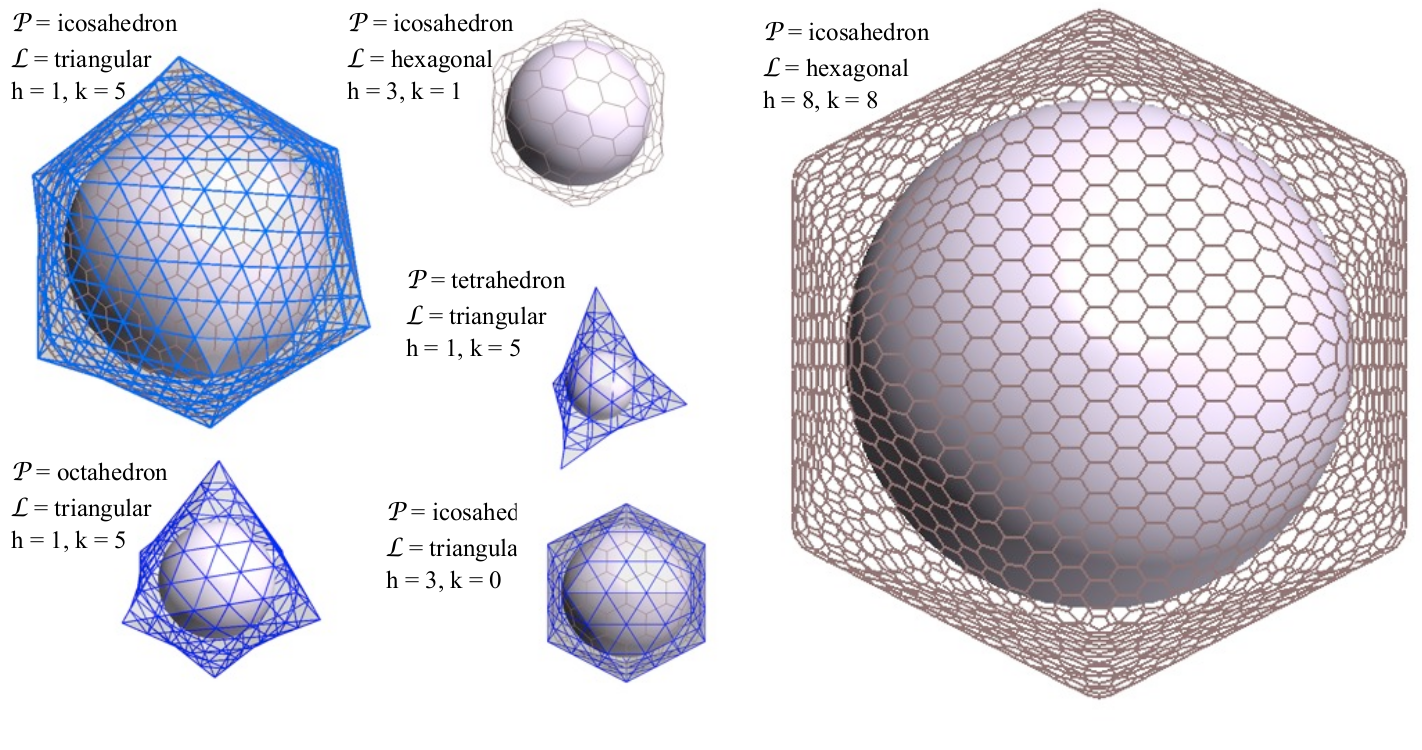}
\caption[Some polyhderon generated by applying \func{TilingGen}]{ \textbf{Some polyhderon generated by applying \func{TilingGen}}.}   
\label{fig:TilingGenResults}
\end{figure}

Some polyhedron generated by applying \func{TilingGen} are shown in Figure \ref{fig:TilingGenResults}.

Note that the tiles that cross the boundaries of a face $\mathcal{T}$ of $\mathcal{P}$ may have a crease along the edge of the $\mathcal{T}$. This is a result of the \emph{unfold-etch-refold} technique. Tiles generated by this algorithm will also have the same problem and such tiles will be non-regular, and in some cases even non-planar. 
%In the next section we address this issue.

\subsection{Curation of Tiles}
\label{sec:assemblytheory:optimization}
As mentioned before, sometimes the lattice faces, which corresponds to tiles/faces of the generated \emph{almost-regular} polyhedron, crosses the boundary of the polyhedral face $\mathcal{T}$ embedded on the lattice. During folding, these faces get warped. There can be exactly three types of scenarios (for mapping to a square or triangular lattice).

\begin{enumerate}
 \item If one corner of $\mathcal{T}$ is at $(0,0)$, and the other corner is at $(h,k)$ such that either $h=0$ or $k=0$, then no lattice face crosses the edges of $\mathcal{T}$, and no curation is required. (see Figure \ref{fig:assemblytheory:curation} top row).
 \item If one corner of $\mathcal{T}$ is at $(0,0)$, and the other corner is at $(h,k)$ such that $h=k$ then some lattice faces are exactly bisected by the edges of $\mathcal{T}$. The curation is quite trivial in this case. If $h=k$, the center of $\mathcal{T}$ would lie on a lattice vertex. Let the center be $D$, and the face $\mathcal{T}$ be $ABC$. Then, folding along $AD$, $BD$ and $CD$ will not warp any lattice face. Additionally, connecting $D$ to the centers of the neighboring polyhedral faces $\mathcal{T}$ would satisfy all global symmetry conditions as well. This folding will produce a base polytope which actually looks like the \emph{almost-regular} polyhedron with $h=k=1$. In fact, for any integer $i$, to polyhedron generated for $h=k=i$ is nothing but subdivisions of the faces of the $h=k=1$ polyhedron. Interestingly, in some cases, the new folding produces a polyhedron with base geometry like some Catalan solids, but will unlike Catalan solids, these will have regular faces and may be non-convex. For example, the $h=k=1$ polyhedron have the same topology as the pentakis dodecahedron (see Figure \ref{fig:assemblytheory:curation} middle row).
 \item If one corner of $\mathcal{T}$ is at $(0,0)$, and the other corner is at $(h,k)$ such that $h\neq k, \& h,k>0$; then, for all mappings on the hexagonal lattice, some lattice faces shall cross the edges of $\mathcal{T}$ in variable ways. There does not exist any folding which can avoid the crossing while maintaining global symmetry (as the lines will not meet at the center of the face $\mathcal{T}$). See Figure \ref{fig:assemblytheory:curation} bottom row. In this case, no exact solution exists, and we provide a numerical approximation which maximizes the regularity while ensuring that global symmetries are not violated (see below). 
\end{enumerate}

\begin{figure}[h!]
\centering
\vspace{-0.6cm}
\includegraphics[width=0.85\linewidth]{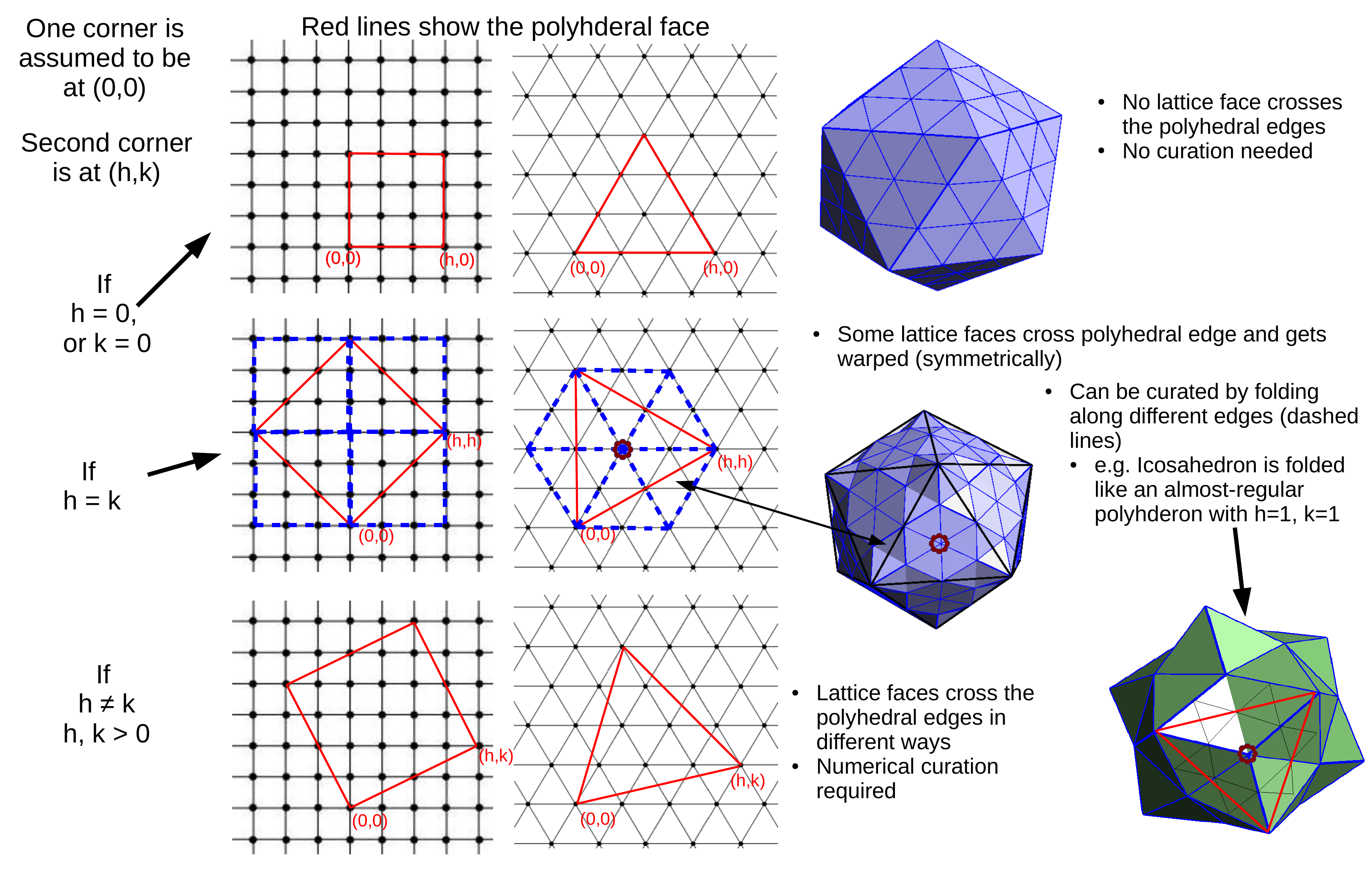}
\vspace{-0.3cm}
\caption[Different cases of warping of tiles, and their curations]{ \textbf{Different cases of warping of tiles, and their curations}.}   
\label{fig:assemblytheory:curation}
\vspace{-0.5cm}
\end{figure}

We should mention that recently Mannige and Brooks \cite{Mannige_Brooks_2010} explored such pseudo-irregularities in icosahedral tilings and suggested the existence of three classes depending on the values of $h$ and $k$. Here, we have generalized that to all \emph{almost-regular} polyhedra, and quantified the exact number of faces that gets warped.

\subsubsection{Curation as a Numerical Optimization Problem}
The goal of curation would be to make all the faces as regular as possible without violating global symmetry. A similar problem specifically for the family of polyhedra generated by mapping an icosahedron onto a hexagonal lattice was addressed in \cite{Schein_Gayed_2014}. In that work Schein and Gayed aimed to make all the hexagonal faces that cross the face boundary, and become creased/non-planar, into planar ones while keeping the edge lengths equal. They also showed that it is possible to provide an efficient numerical solution to the problem which ensures that no two hexagonal/pentagonal tiles lie on the same plane and the overall polyhedron is convex. However, the shapes of the hexagons are allowed to get distorted such that the angles are no longer equal, and may vary a lot within the same hexagon. Hence, the faces are no longer congruent (or even nearly congruent) to each other. This makes such a polyhedron non-amenable for modeling structures formed using a single type of building block, for instance viral capsids. In contrast, we want to maintain the congruence of the tiles as much as possible. 

When mapping a polyhedron onto a triangular lattice, the generated polyhedra falls under the class called deltahedra, polyhedra whose faces are all equilateral triangles. Even though there are an infinite family of deltahedra \cite{Trigg_1978} (our families are also infinite), it has been known since Freudenthal and van der Waerden's work \cite{Freudenthal_1947}, that there are exactly eight convex deltahedra, having 4, 6, 8, 10, 12, 14, 16 and 20 faces; among them only three are regular or have symmetries like the regular ones. So, our family of \emph{almost-regular} polyhedra cannot be convex and regular at the same time. We prioritized regularity. 

\begin{figure}
\centering
\vspace{-0.5cm}
\includegraphics[width=0.8\linewidth]{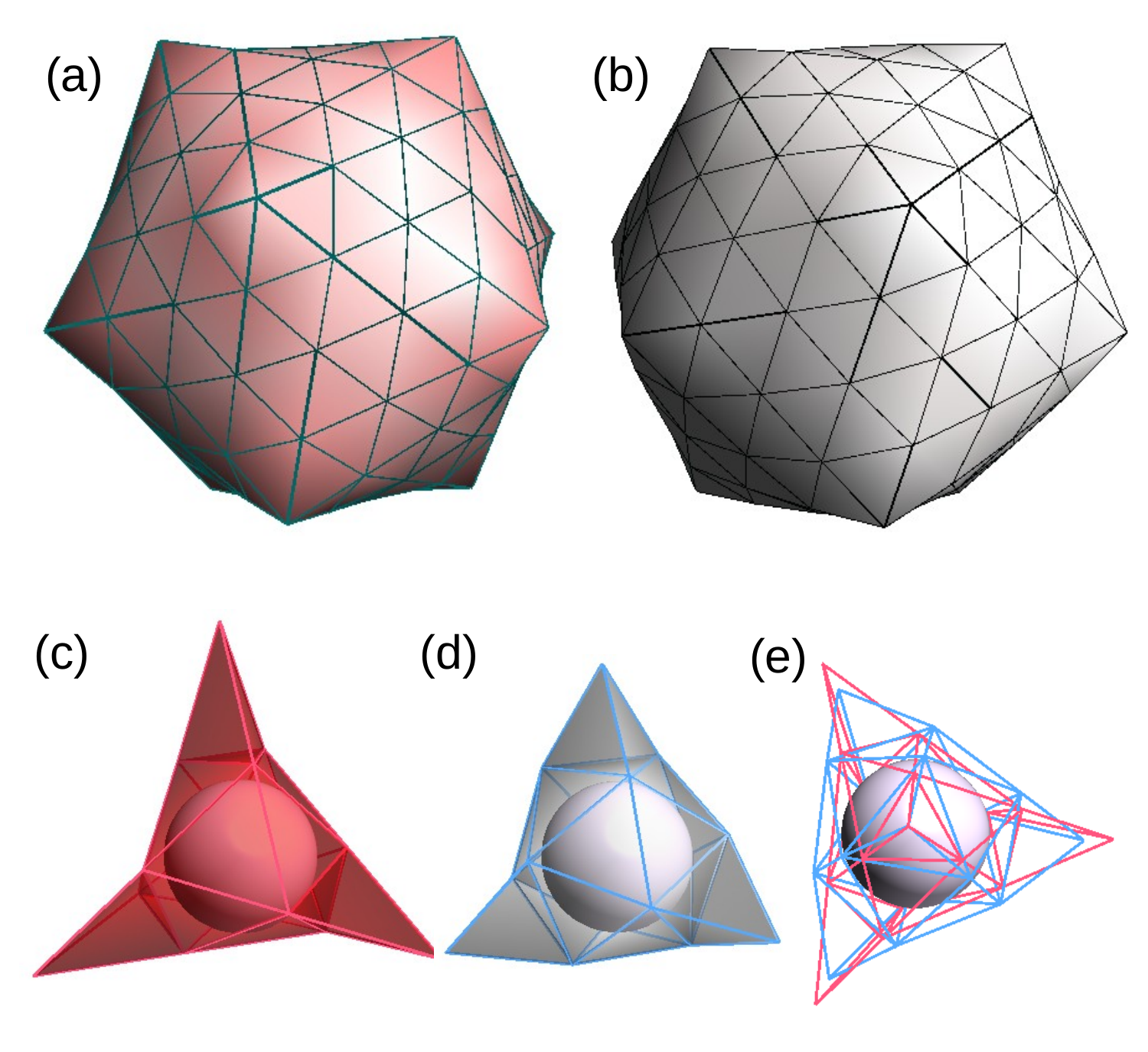}
\vspace{-0.5cm}
\caption[Numerical curation of warped faces]{ \textbf{Numerical curation of warped faces}. Top row: (a) shows a polyhedron with icosahedral symmetry and 260 tiles. The color of the triangles are determined by the ratio of the longest and the shortest edge of the tile. Ratios 1 to 1.3+ are colored using white to red gradient. The triangles at the corners and near the edges of the icosahedron have higher distortion. (b) shows the result of numerical curation. All the triangles are now regular, the worst ratio being 1.034, and the symmetry is preserved. Bottom row: (c-d) show a similar before-after figure for a smaller polyhedra, but one where the warping is more apparent. The numerical optimization brought down the ratio of the worst triangle from 2.13 to 1.004. (e) shows a superposition of the two states to highlight that points are updated symmetrically.}
\label{fig:assemblytheory:demo3}
\vspace{-0.5cm}
\end{figure}

\begin{itemize}
 \item Let the set of all points on the generated polyhedron be $\mathbb{S}_{all} = \mathbb{T}_{all}(\mathbb{S})$.
 \item Let $\mathbb{E}_1$ be the set of lattice/tile edges on the generated polyhedron
 \item Let $\mathbb{E}_2$ be the set of diagonals of all the tiles on the generated polyhedron
 \item Let, for any point $p \in \mathbb{S}_{all}$, the functions $s(p)$ and $t(p)$ returns respectively a point $q \in \mathbb{S}$ and a transformation $T \in \mathbb{T}_{all}$ such that $p = T(q)$.
 \item Let $dist(u,v)$ be the square of the Euclidean distance between two points.
\end{itemize}

In our calculations we shall only update the positions of the points in $\mathbb{S}$, and all other points $p \in \mathbb{S}_{all}$ on the polyhedron will be generated as $t(u)(s(u))$. This ensures that the points are always moved in a symmetric way with respect to the global symmetry axes. Hence global symmetry is never violated.

Let $\mathbb{S}^0$ be the initial positions of the points in $\mathbb{S}$. As we update the locations of the points in $\mathbb{S}$ in our algorithm, the squared displacement of each point $p \in \mathbb{S}$ will be defined as $\delta(p) = (dist(p,p^0))^2$, where $p^0 \in \mathbb{S}^0$ is the initial position of $p$. Also, the squared length of a line segment $e(u,v) \in \mathbb{E}_1 \cap \mathbb{E}_2$ will be computed as $dist(t(u)(s(u)),t(v)(s(v)))^2$ and be denoted $l(e)$. Let us also define $\mu_1 = \frac{1}{|\mathbb{E}_1|} \sum_{e \in \mathbb{E}_1} (l(e))$ and $\mu_2 = \frac{1}{|\mathbb{E}_2|} \sum_{e \in \mathbb{E}_2} (l(e))$. 

Finally, we define an energy function $\mathcal{F}(\mathbb{S})$ as follows:

$\mathcal{F}(\mathbb{S}) = \frac{1}{|\mathbb{E}_1|} (\sum_{e \in \mathbb{E}_1} (l(e) - \mu_1)) + \frac{1}{|\mathbb{E}_2|} (\sum_{e \in \mathbb{E}_2} (l(e) - \mu_2)) + \frac{1}{|\mathbb{S}|} (\sum_{p\in \mathbb{S}} \delta(p)).$

Now, we minimize the function $\mathcal{F}(\mathbb{S})$ over the positions of the points in $\mathbb{S}$.

This is clearly a quadratic optimization problem over $h^2+hk+k^2$ variables, and for most practical values of $h$ and $k$ it can be solved efficiently using any numerical solution techniques. We chose to use the limited memory variant of Broyden-Fletcher-Goldfarb-Shanno (BFGS) algorithm \cite{Byrd_Lu_Nocedal_Zhu_1995} due to its faster convergence rates. Also since first and second derivatives (hessian) of the energy function are straight-forward to compute, the numerical solution does not require finite-differences and heuristic based hessians, making the solution under BFGS more efficient and stable. Figure \ref{fig:assemblytheory:demo3} shows some examples of numerically curating warped faces.

\section{Constructing Shell Structures}
Viruses, as discussed before, have icosahedral symmetric shells formed by multiple copies of the same protein. While several existing works (e.g. \cite{berger94,zlotnick05,zandi05,Rapaport_2004,Bona_Sitharam_Vince_2011,Bahadur_Rodier_Janin_2007,Carrillo-Tripp_Brooks_Reddy_2008,Cheng_Brooks_2013}) leverage symmetry to analyze a given shell structure, we are the first to propose a generative algorithm to predict all such structures.

We propose that each virus shell is templated on a particular \emph{almost-regular} polyhedron. We consider the assmebly prediction problem where the number of proteins and the structure of a individual protein is known, and the structure of the whole shell is unknown and must be predicted. We propose the following generation algorithm (Figure \ref{fig:ShellGen}) to solve this problem. Note that, while the algorithm specifies icosahedral symmetry as that is the relevant one for viruses, it can easily be extended to handle other cases.

\begin{figure}[h!]
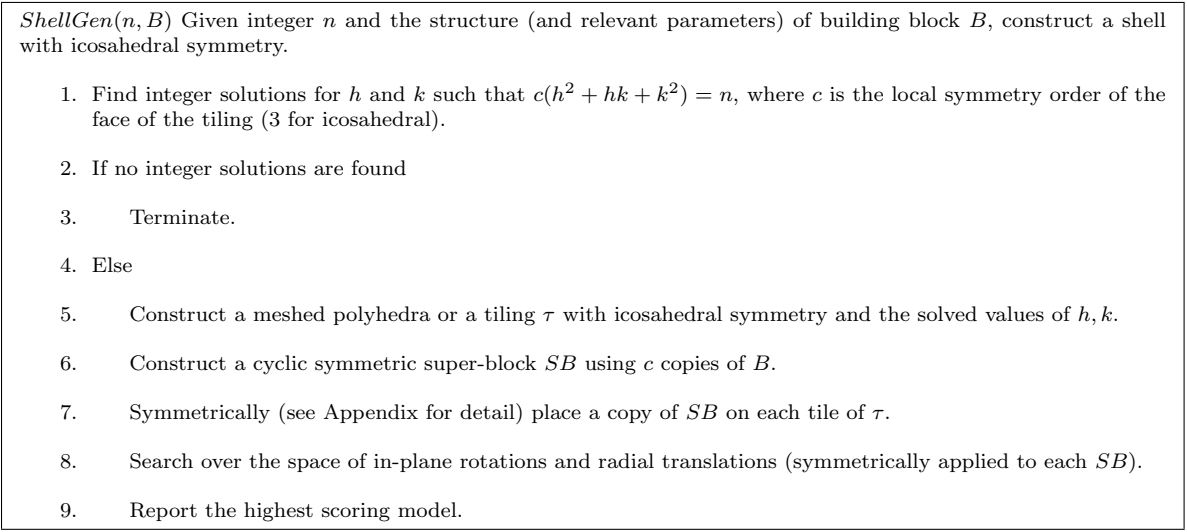
 
\centering
\begin{minipage}{\codewidth}
\begin{center}
 	\framebox{
		\begin{minipage}{\codewidth}
	 		{\scriptsize
	
	 		\noindent\func{$ShellGen(n,B)$}
			Given integer $n$ and the structure (and relevant parameters) of building block $B$, construct a shell with icosahedral symmetry.
			
	 		\noindent
	 		\begin{enumerate}
				\item Find integer solutions for $h$ and $k$ such that $c(h^2+hk+k^2)=n$, where $c$ is the local symmetry order of the face of the tiling (3 for icosahedral).
				\item If no integer solutions are found 
				\item ~~~~ Terminate.
	 			\item Else
				\item ~~~~ Construct a meshed polyhedra or a tiling $\tau$ with icosahedral symmetry and the solved values of $h,k$.
				\item ~~~~ Construct a cyclic symmetric super-block $SB$ using $c$ copies of $B$.
				\item ~~~~ Symmetrically (see Appendix for detail) place a copy of $SB$ on each tile of $\tau$. 
				\item ~~~~ Search over the space of in-plane rotations and radial translations (symmetrically applied to each $SB$).
				\item ~~~~ Report the highest scoring model.
			\end{enumerate}
	 		}%end scriptsize
 
 		\end{minipage}
 	}% end framebox
\end{center}
\end{minipage}
\vspace{-0.2cm}
\caption[\func{ShellGen}]{\func{ShellGen}: Algorithm for constructing a shell structure template upon an \emph{almost-regular} of requisite size.}
\label{fig:ShellGen}
\vspace{-0.3cm}
\end{figure}

Analysis of the algorithm follows.

\subsection{Datails of Decoration Rules for Constructing Thick Shell Structures}
We assume that the structures of tiles with the same internal symmetry (e.g. small triangle) remain the same across the shell. And all tiles are decorated by identical building blocks in a cyclic symmetric configuration. In other words, we assume a direct mapping between the internal symmetry of the tile and the set of building blocks used to decorate it. So for example, we decide to use a complex of three identical blocks in a $C^3$ configuration (henceforth called a c-tile) to decorate a small triangle of the layout, as opposed to using 3 independent blocks to decorate each corner of the c-tile.

\paragraph{Representation of a tile} Each tile is represented using five parameters $(\mathbf{u_1}, \mathbf{u_2}, f)$.
  \begin{itemize}  
    \item $\mathbf{u_1}$ is a unit vector pointing from the origin to the center of symmetry of the tile.
    \item $\mathbf{u_2}$ is a unit vector pointing from the center of symmetry of the tile to one representative corner of the tile.
    \item $f$ is the order of symmetry of the decoration to be placed inside the tile.
  \end{itemize}

\paragraph{Representation of a c-tile} Each $c$-tile is represented using five parameters $(\mathbf{v_1}, \mathbf{v_2}, \mathbf{c}, o)$.
    \begin{itemize}  
      \item $\mathbf{v_1}$ is a unit vector representing the symmetry axis.
      \item $\mathbf{v_2}$ is a unit vector orthogonal to the symmetry axis pointing to the centroid of one copy from the center of symmetry.
      \item $\mathbf{c}$ is the center of symmetry.
      \item $o$ is the order of symmetry.
    \end{itemize}

\paragraph{Decorating rules}
To decorate a tile $(\mathbf{u_1}, \mathbf{u_2}, f)$ using a $c$-tile $(\mathbf{v_1}, \mathbf{v_2}, \mathbf{c}, o)$, the following must be satisfied-
\begin{itemize}
 \item $o = f$.
 \item $c$ is translated such that it lies on the line $m{u_1}=0$ where $m$ is a scalar.
 \item Rotated such that $v_1$ aligns with $u_1$ and $v_2 \times u_2$ is parallel to $u_1$.
\end{itemize}

Further, we ensure the following-
\begin{itemize}
 \item If $C_i$ and $C_j$ are any two $c$-tiles of the same order,
   \begin{itemize}
      \item The $c$-tiles are translated in a symmetric way. i.e.\ $m_i = m_j$.
      \item The $c$-tiles are rotated in a symmetric way. i.e.\ $v_{2_i} \cdot u_{2_i} = v_{2_j} \cdot u_{2_j}$.
   \end{itemize}
\end{itemize}

The following result is immediate.

\begin{lemma}
 The decoration rules are necessary and sufficient for preserving local symmetry (internal to the tile) and global symmetry of the tiling.
\end{lemma}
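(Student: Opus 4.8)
The plan is to prove the two directions separately, exploiting the natural split of the decoration rules into a ``per-tile'' group ($o=f$; $\mathbf{c}$ on the line $m\mathbf{u_1}=0$; $\mathbf{v_1}$ aligned with $\mathbf{u_1}$ together with the handedness condition $\mathbf{v_2}\times\mathbf{u_2}$ parallel to $\mathbf{u_1}$) and an ``across-tiles'' group ($m_i=m_j$ and $v_{2_i}\cdot u_{2_i}=v_{2_j}\cdot u_{2_j}$ for $c$-tiles of equal order). I will use two standing facts: the local symmetry axis of the tiling at a tile $(\mathbf{u_1},\mathbf{u_2},f)$ is exactly the line through the origin in direction $\mathbf{u_1}$ (it passes through the polyhedron centre and the tile centre), and every $g$ in the icosahedral group $I$ is a rotation fixing the origin that permutes the tiles, sending a tile of order $f$ to another tile of the same order and transporting its frame $(\mathbf{u_1},\mathbf{u_2})$ rigidly.

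For \textbf{sufficiency}, assume the rules hold. Local symmetry: since $o=f$, $\mathbf{v_1}\parallel\mathbf{u_1}$ and $\mathbf{c}$ lies on the line $m\mathbf{u_1}=0$, the $c$-tile's own axis of $C^o$ symmetry coincides with the tile's local axis and its centre of symmetry lies on that axis (the handedness condition removes the remaining reflection ambiguity). Hence the generator of $C^f$ — rotation by $2\pi/f$ about $\mathbf{u_1}$ — cyclically permutes the $o=f$ copies of $B$ and fixes the decorated tile, so local symmetry is preserved. Global symmetry: fix $g\in I$ and a tile $i$ with $j=g\cdot i$; then $j$ has the same order $f$ and $g$ carries $(\mathbf{u_1}^{(i)},\mathbf{u_2}^{(i)})$ to $(\mathbf{u_1}^{(j)},\mathbf{u_2}^{(j)})$. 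The decorated tile is determined by its frame $(\mathbf{u_1},\mathbf{u_2})$ together with the radial parameter $m$ and the in-plane rotation recorded by $\mathbf{v_2}\cdot\mathbf{u_2}$ (with handedness fixed as above). Because the across-tiles rules give $m_i=m_j$ and $v_{2_i}\cdot u_{2_i}=v_{2_j}\cdot u_{2_j}$, applying $g$ to the decorated tile $i$ produces precisely the decorated tile $j$. Ranging over all $g$ and all tiles shows the decorated tiling is $I$-invariant.

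For \textbf{necessity}, assume the decorated tiling (identical blocks $B$, one super-block type per symmetry order) is both locally and globally symmetric. Local symmetry of a tile of order $f$ says its decoration is fixed by rotation by $2\pi/f$ about the radial axis $\mathbf{u_1}$; since the blocks are identical and arranged cyclically, the decoration must be a single $C^f$-orbit of one block, forcing $o=f$, forcing the $c$-tile's symmetry axis to be the radial axis ($\mathbf{v_1}\parallel\mathbf{u_1}$) and its centre onto that axis ($\mathbf{c}$ on $m\mathbf{u_1}=0$), with the handedness condition pinned once a reference copy is chosen. For the across-tiles rules, take $c$-tiles $i,j$ of equal order: by global $I$-symmetry within each orbit and by the modeling assumption that equal-order tiles carry identical super-blocks across orbits, the decoration of $j$ equals $g\cdot(\text{decoration of }i)$ for the appropriate $g\in I$ (or is literally identical in frame-relative terms). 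Since $g$ fixes the origin it preserves the radial distance of $\mathbf{c}$, giving $m_i=m_j$; since $g$ applies the same orthogonal map to $\mathbf{v_2}$ and $\mathbf{u_2}$ it preserves their inner product, giving $v_{2_i}\cdot u_{2_i}=v_{2_j}\cdot u_{2_j}$. These are exactly the stated rules.

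I expect the real obstacle to be the first step of necessity: rigorously extracting from ``local symmetry plus identical blocks in a cyclic configuration'' that the decoration is exactly one $C^f$-orbit with axis on the radial line. One must rule out degenerate alternatives — the blocks splitting into several smaller cyclic sub-orbits, or a centre of symmetry displaced off the radial axis yet still yielding a $C^f$-invariant picture — and this is where the precise definitions of a $c$-tile and of the tiling's local symmetry axis have to be invoked carefully. By contrast, both the sufficiency direction and the global part of necessity are essentially bookkeeping about how a rotation in $SO(3)$ acts on the tile frame and on the two scalar placement parameters.
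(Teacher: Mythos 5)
The paper offers no proof of this lemma at all --- it is introduced with ``The following result is immediate'' and left there --- so your proposal is being measured against an argument the authors never wrote down. Your sufficiency direction is sound and is surely what they had in mind: align the $c$-tile axis with the radial axis $\mathbf{u_1}$, match the orders, and the generator of $C^f$ fixes the decorated tile; the two across-tile equalities, together with the handedness condition $\mathbf{v_2}\times\mathbf{u_2}\parallel\mathbf{u_1}$, make the decoration equivariant under the global group, so $I$-invariance follows by pushing one decorated tile around with each $g\in I$.

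The necessity direction, however, contains two genuine gaps, only the first of which you flag. First, local $C^f$-invariance does not force $o=f$: a $C^o$-symmetric super-block with $f\mid o$ (e.g.\ $o=2f$) placed on the radial axis is still invariant under rotation by $2\pi/f$, so $o=f$ becomes necessary only once you import the modeling convention that each tile carries exactly $c=f$ copies of $B$ (as \func{ShellGen} stipulates via $c(h^2+hk+k^2)=n$); your argument should say this explicitly rather than deriving $o=f$ from symmetry. Second, and more seriously, the across-tile conditions $m_i=m_j$ and $v_{2_i}\cdot u_{2_i}=v_{2_j}\cdot u_{2_j}$ are forced by global $I$-invariance only for tiles lying in the \emph{same} $I$-orbit. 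The tiling has $20(h^2+hk+k^2)$ small triangles while $|I|=60$, so whenever $h^2+hk+k^2>3$ there are several orbits of equal-order tiles, and $I$-invariance alone permits different radial offsets and in-plane rotations on different orbits. You paper over this with ``by the modeling assumption that equal-order tiles carry identical super-blocks across orbits,'' but that assumption is precisely the content being claimed as \emph{necessary}; it does not follow from ``local symmetry plus global symmetry of the tiling'' as the lemma is worded. To close the gap you would need to strengthen the hypothesis to include tile-congruence (the face-transitivity that motivates the almost-regular construction in the first place), or restrict the claim of necessity to tiles within a single orbit.
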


Since the \func{ShellGen} algorithm satisfy the decoration rules, it correctly generates symmetric shell structures.

\subsection{Reults}

\begin{figure}[t!]
\centering
%\begin{center}
\vspace{-0.3cm}
 \includegraphics[width=0.9\linewidth]{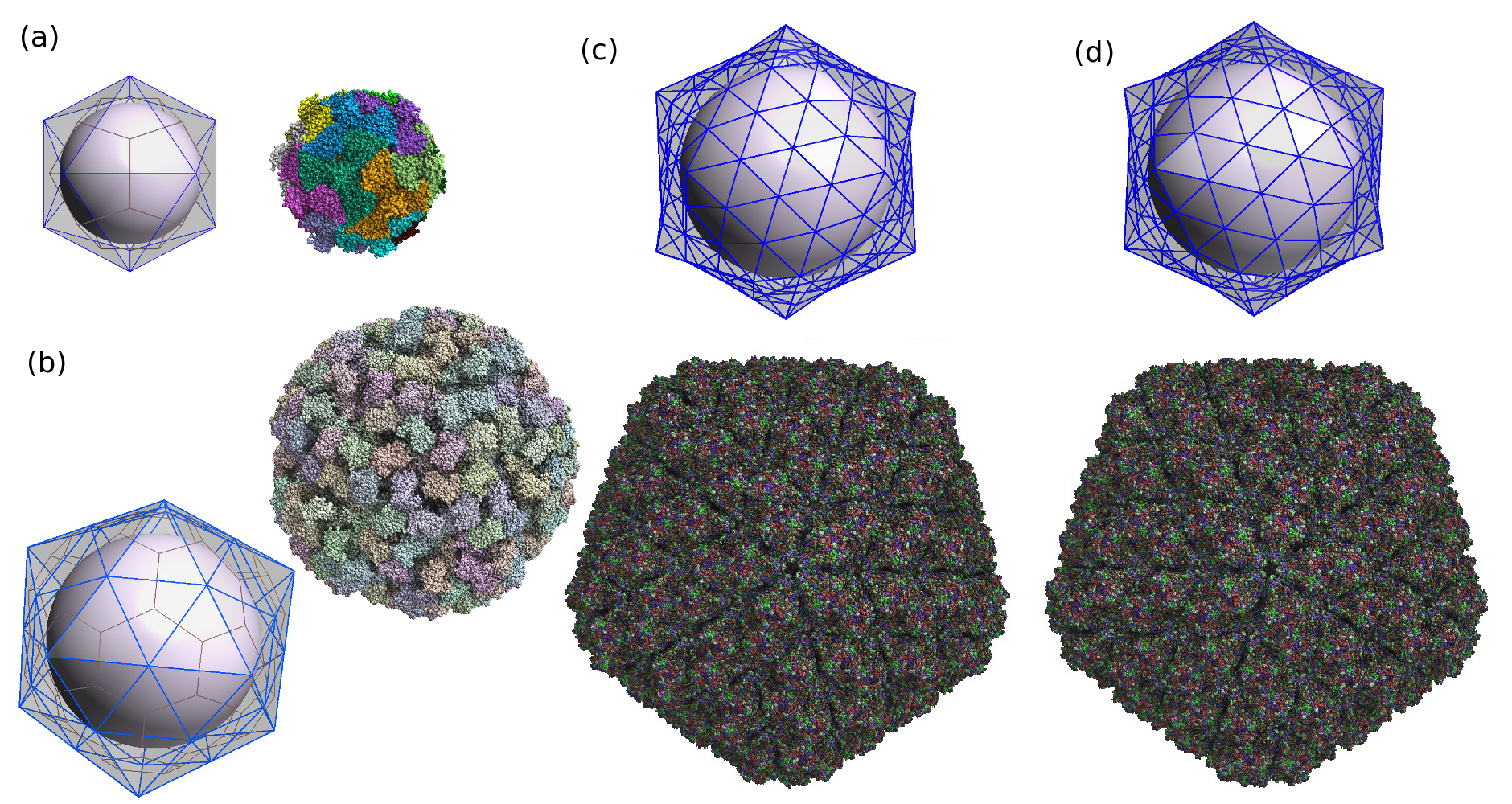}
\vspace{-0.2cm}
\caption[Predicted shell structures for known viruses] {\textbf{Predicted shell structures for known viruses.} (a) Predicted shell structure for Tobacco Necrosis Virus using a polyhedra with h=1, k=0. (b) Predicted shell structure for Nudaurelia Capensis Virus using a polyhedron with h=2, k=0. (c) Predicted shell structure for Rice Dwarf Virus outer shell using h=1, k=3. (d) Predicted structure for the same virus with h=3, k=1. All the predicted models have the correct inter-tile (or inter-protein) interfaces/contact in terms of geometry, and have the correct global and local topology; except, the one in (c) which has wrong chilarity).}
\vspace{-0.5cm}
%\end{center}
\label{fig:assemblytheory:demo}
\end{figure}

\vspace{-0.5cm}
\subsubsection{Reproducing Known Shell Structures}
To verify the effectiveness of the tiling and decoration algorithms, we applied it to predict shell structures for some viruses for which the structure of the individual building blocks (proteins), as well as the entire shell is known. We show some examples here. 

\begin{figure}[t!]
\centering
 \includegraphics[width=0.85\linewidth]{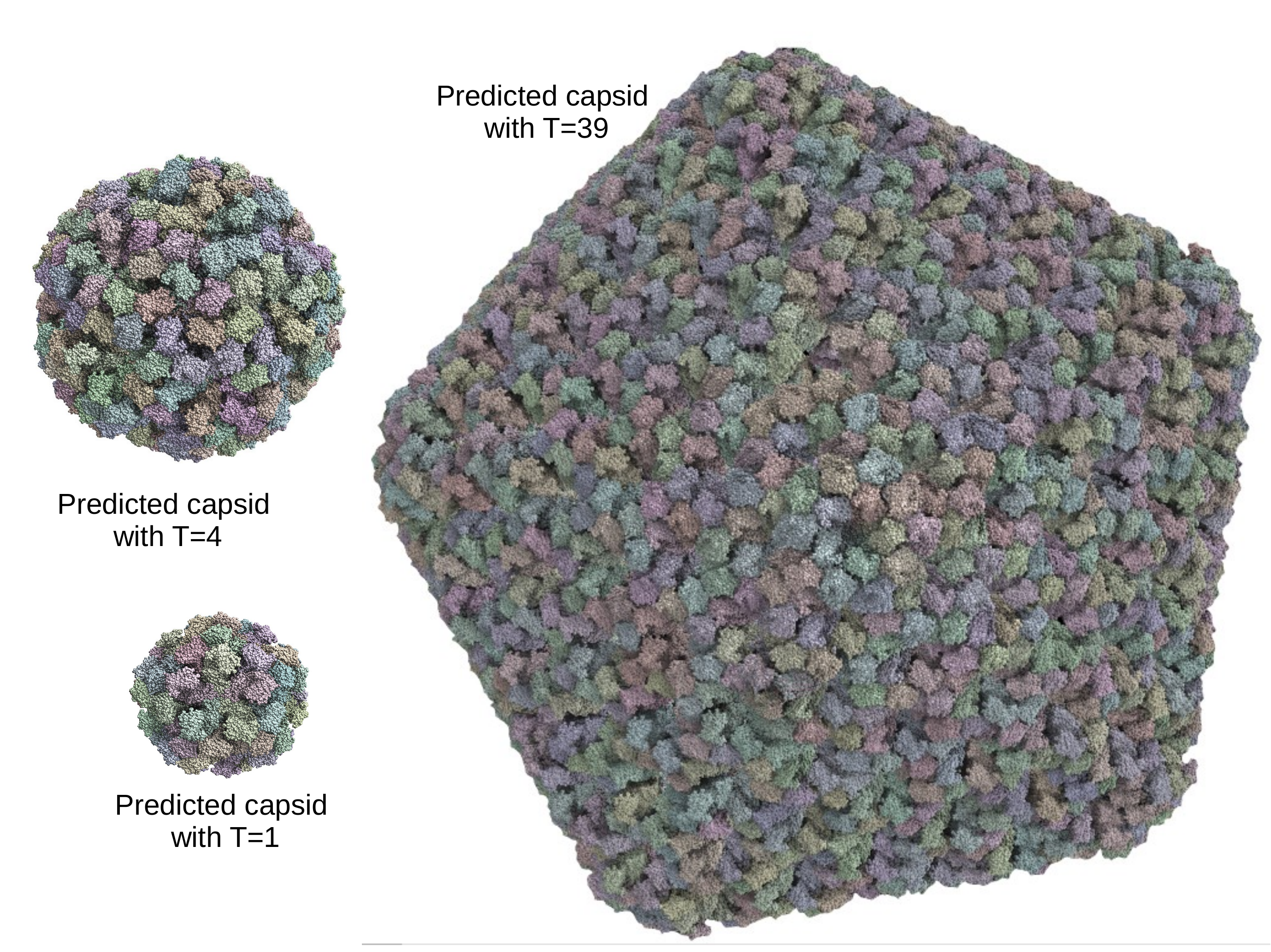}
\caption[Shells of different sizes using the same protein] {Shells of different sizes using the same protein}
\label{fig:virus:sizes}
\end{figure}

In Figure \ref{fig:assemblytheory:demo}(a), we present the results of modeling the Tobacco Necrosis Virus which has 60 proteins on its shell. We templated it based on a polyhedra with h=1, k=0 and the resulting computationally predicted shell had less that 5A RMSD error with respect to the known shell. This error is considered acceptable in molecular biology community. There is no topological errors. We had similar success in predicting the structure of Nudaurelia Capensis Virus which requires 240 proteins on its shell and the template polyhedra was constructed with h=2, k=0 (Figure \ref{fig:assemblytheory:demo}(b)). Finally, Figure \ref{fig:assemblytheory:demo}(c-d) show the outcome of predicting the shell of Rice Dwarf Virus which has 780 proteins. The layout for this can be either a polyhedra with h=3, k=1; or h=1, k=3; the latter is topologically incorrect if compared to the shell found in nature. Unfortunately, our geometric optimization algorithm and scoring model cannot discriminate between the two.

\subsubsection{Assembling Multiple Sized Shells Using the Same Building Block}

In Figure \ref{fig:virus:sizes} we show that our algorithm can easily produce shells of different sizes using the same building blocks. Here we used the same protein, but decorated tilings of different complexities and reported the highest scoring models for each size.

\section{Conclusion}
\label{sec:assemblytheory:conclusion}
We have characterized a new family of polyhedra with regular faces such that it is isotoxal, isohedral, and have exactly 2 types of vertices; as well as a dual family which is isogonal, isotoxal and have exactly 2 types of regular faces. We have shown that both of polyhedrons of these families generated by unfolding a regular polyhedron onto a lattice in a compatible way, thereby allowing the lattice vertices, edges and faces to etch out a tiling on the unfolded polyhedron, and finally folding it back again. Further, the compatible ways are specified using only a couple of integer parameters. We also provided a deterministic and efficient algorithm for generating such polyhedra of any size (determined by the two parameters). We have proved that our construction covers all possible polyhedron which satisfies the stated properties. When considering the geometric aspects of the generated polyhedra, we characterized the cases where the faces may become non-regular, and provided solutions for each case. 

Finally, we point out that our class of polyhedron is not similar to the known families like Catalan solids, Johnson solids, or Archimedean solids. Some Catalan solids like the tetrakis hexahedron, triakis octahedron, triakis icosahedron, rhombic dodecahedron, rhombic triacontahedron, or pentakis dodecahedron may seem like they satisfy the properties of \emph{almost-regular} polyhedron, but actually all of them violate either the global of the local symmetry conditions. Also Archimedean solids like the truncated cube can be generated by placing the triangles of a tetrahedron on a hexagonal lattice such that the corners of each triangle fall on the centers of 3 faces surrounding a single face. Many other Archimedian solids are isogonal and isotoxal, but are none of them (not even the truncated cube) are duals of of any \emph{almost-regular} polyhedron.

The characterization and construction would greatly aid computational analysis and modeling of dome or spherical shaped objects with symmetry and regular tiles, which is relevant for anti-viral drug design, designing nano-cages for drug delivery and cancer therapy, designing easy to assemble masonry structures etc. Additionally, the regular tilings we produce promises to be an interesting template for arbitrarily refined meshing of manifolds, by diffeomorphic mappings between the manifold and a sphere.

\end{document}